\newcommand{\term}[1]{\textbf{\textit{#1}}}
\newcommand{\T}{\textsc{true}}
\newcommand{\F}{\textsc{false}}
\newcommand{\cA}{\mathcal{A}}
\newcommand{\Udist}{\mathbb{U}}
\newcommand{\cI}{\mathcal{I}}
\newcommand{\cV}{\mathcal{V}}
\newcommand{\cW}{\mathcal{W}}
\newcommand{\cO}{\mathcal{O}}
\newcommand{\cT}{\mathcal{T}}
\newcommand{\cR}{\mathcal{R}}
\newcommand{\cS}{\mathcal{S}}
\newcommand{\bsigma}{\bm{\sigma}}
\newcommand{\targprof}{\bm{\Tilde{\sigma}^*}}
\newcommand{\regret}{\mathsf{Reg}}
\newcommand{\numItems}{m}
\newcommand{\thresh}{\nu}
\newcommand{\offer}{\omega}
\newcommand{\infomap}{\phi}
\newcommand{\piBR}{\pi^*}
\newcommand{\given}{\mid}
\newcommand{\gain}{\Gamma}
\providecommand{\abs}[1]{\lvert#1\rvert}
\newcommand{\gengoof}{\textsc{GenGoof}\xspace}
\newcommand{\gengoofK}[1]{\textsc{GenGoof}$_{#1}$\xspace}
\newcommand{\barg}{\textsc{Bargain}\xspace}
\definecolor{player1}{RGB}{213, 94, 0}
\definecolor{player1_1}{RGB}{100, 143, 255}
\definecolor{player1_2}{RGB}{0, 153, 255}
\definecolor{player1_3}{RGB}{226, 146, 57}
\definecolor{player1_4}{RGB}{161, 191, 1}
\definecolor{player1_5}{RGB}{121, 71, 186}
\definecolor{player1_6}{RGB}{193, 32, 224}
\definecolor{player1_7}{RGB}{122, 4, 4}
\definecolor{player1_8}{RGB}{37, 124, 163}
\definecolor{player1_9}{RGB}{210, 153, 94}
\definecolor{player1_10}{RGB}{94, 210, 153}
\definecolor{player1_11}{RGB}{107, 65, 147}
\definecolor{player1_12}{RGB}{184, 126, 122}
\definecolor{player1_13}{RGB}{176, 203, 166}
\definecolor{player1_14}{RGB}{143, 175, 107}
\definecolor{player1_15}{RGB}{232, 244, 218}
\definecolor{player1_16}{RGB}{145, 76, 61}
\definecolor{player1_17}{RGB}{193, 94, 194}
\definecolor{player1_18}{RGB}{153, 0, 0}
\definecolor{player1_19}{RGB}{0, 153, 153}
\definecolor{player1_20}{RGB}{48, 233, 186}
\definecolor{player2}{RGB}{0, 114, 189}
\definecolor{player2_1}{RGB}{135, 100, 255}
\definecolor{player2_2}{RGB}{176,224,230}
\definecolor{player2_3}{RGB}{128, 0, 128}
\definecolor{player2_4}{RGB}{255, 178, 255}
\definecolor{player2_5}{RGB}{205,188,198}
\definecolor{player2_6}{RGB}{214, 198, 164}
\definecolor{player2_7}{RGB}{204, 0, 204}
\definecolor{player2_8}{RGB}{204, 204, 0}
\definecolor{player2_9}{RGB}{184, 173, 241}
\definecolor{player2_10}{RGB}{64, 224, 208}
\definecolor{player2_11}{RGB}{204, 255, 0}
\definecolor{player2_12}{RGB}{229, 193, 0}
\definecolor{chance}{RGB}{0, 158, 115}
\definecolor{otherplayer1}{RGB}{204, 121, 167}
\definecolor{otherplayer2}{RGB}{240, 228, 66}
\definecolor{terminal}{RGB}{253, 236, 160}
\definecolor{oldplayer}{RGB}{150, 150, 150}
\definecolor{gg_player1}{RGB}{128, 0, 128}
\definecolor{gg_player1_1}{RGB}{210, 151, 139}
\definecolor{gg_player1_2}{RGB}{192, 222, 150}
\definecolor{gg_player1_3}{RGB}{0, 105, 150}
\definecolor{gg_player1_4}{RGB}{255, 105, 180}
\definecolor{gg_player1_5}{RGB}{142, 59, 0}
\definecolor{gg_player1_6}{RGB}{152, 152, 255}
\definecolor{gg_player1_7}{RGB}{243, 222, 138}
\definecolor{gg_player1_8}{RGB}{1, 137, 223}
\definecolor{gg_player1_9}{RGB}{214, 85, 85}
\definecolor{gg_player1_10}{RGB}{218, 165, 32}
\definecolor{gg_player1_11}{RGB}{80, 200, 120}
\definecolor{gg_player1_12}{RGB}{120, 80, 200}
\definecolor{gg_player1_13}{RGB}{204,161,201}
\definecolor{gg_player1_14}{RGB}{190,209,227}
\definecolor{gg_player1_15}{RGB}{238,251,29}
\definecolor{gg_player2}{RGB}{230, 0, 230}
\definecolor{gg_player2_1}{RGB}{65, 105, 225}
\definecolor{gg_player2_2}{RGB}{68, 170, 153}
\definecolor{gg_player2_3}{RGB}{230, 97, 0}
\definecolor{gg_chance}{RGB}{126, 127, 154}
\title[Policy Abstraction and Nash Refinement in TE-PSRO]{Policy Abstraction and Nash Refinement in Tree-Exploiting PSRO}
\author{Christine Konicki$^*$}\thanks{$^*$Konicki worked on this paper while a PhD student at the University of Michigan}
\affiliation{
  \institution{Michigan Tech Research Institute}
  \city{Ann Arbor}
  \country{USA}}
\email{ckonicki@mtu.edu}
\author{Mithun Chakraborty}
\affiliation{
  \institution{University of Michigan}
  \city{Ann Arbor}
  \country{USA}}
\email{dcsmc@umich.edu}
\author{Michael P. Wellman}
\affiliation{
  \institution{University of Michigan}
  \city{Ann Arbor}
  \country{USA}}
\email{wellman@umich.edu}
\begin{abstract}
Policy Space Response Oracles (PSRO) interleaves empirical game-theoretic analysis with deep reinforcement learning (DRL) to solve games too complex for traditional analytic methods. 
Tree-exploiting PSRO (TE-PSRO) is a variant of this approach that iteratively builds a coarsened empirical game model \textit{in extensive form} using data obtained from querying a simulator that represents a detailed description of the game.
We make two main methodological advances to TE-PSRO that enhance its applicability to complex games of imperfect information. 
First, we introduce a scalable representation for the empirical game tree where edges correspond to \textit{implicit policies} learned through DRL.
These policies cover conditions in the underlying game abstracted in the game model, supporting sustainable growth of the tree over epochs. 
Second, we leverage extensive form in the empirical model by employing refined Nash equilibria to direct strategy exploration.
To enable this, we give a modular and scalable algorithm based on generalized backward induction for computing a subgame perfect equilibrium (SPE) in an imperfect-information game. 
We experimentally evaluate our approach on a suite of games including an alternating-offer bargaining game with outside offers; our results demonstrate that TE-PSRO converges toward equilibrium faster when new strategies are generated based on SPE rather than Nash equilibrium, and with reasonable time/memory requirements for the growing empirical model.
\end{abstract}
\keywords{Game Theory, Extensive-form Games, PSRO}
\newcommand{\BibTeX}{\rm B\kern-.05em{\sc i\kern-.025em b}\kern-.08em\TeX}
\begin{document}


\pagestyle{fancy}
\fancyhead{}


\maketitle 


\section{Introduction}\label{sec:intro}

\term{Empirical game-theoretic analysis} (EGTA) \citep{Tuyls20,Wellman16} reasons about complex game scenarios through \term{empirical game} models estimated from simulation data.
A popular form of EGTA is the \term{policy space response oracles} (PSRO) framework \citep{psro17} (Fig.~\ref{fig:psro}), in which the empirical game is iteratively extended by adding best responses (BRs) derived from \term{deep reinforcement learning} (DRL).
The vast majority of prior work on EGTA and PSRO \citep{Bighashdel24,Wellman24tg} represents the empirical game in normal form even though the real underlying game consists of agents' strategies interacting via sequential decisions under various unknowns. 
\citet{McAleer21} introduced \term{XDO} as an alternative to PSRO that maintains an empirical game in extensive form, to capture a combinatorial space of strategies with the choice of actions at each decision point in the game tree. 
We originally proposed and evaluated a
\term{tree-exploiting} version of EGTA (TE-EGTA) that maintains an empirical game in an extensive form based on a coarsening of the underlying game \citep{konicki22}. 
This work demonstrated that a significant improvement in model accuracy and strategy exploration, compared to normal-form EGTA, can be achieved by using the tree structure to model even a little of the information-revelation and action-conditioning patterns of the underlying game.

\begin{figure}[!htb]
    \centering
    \includegraphics[width=\columnwidth]{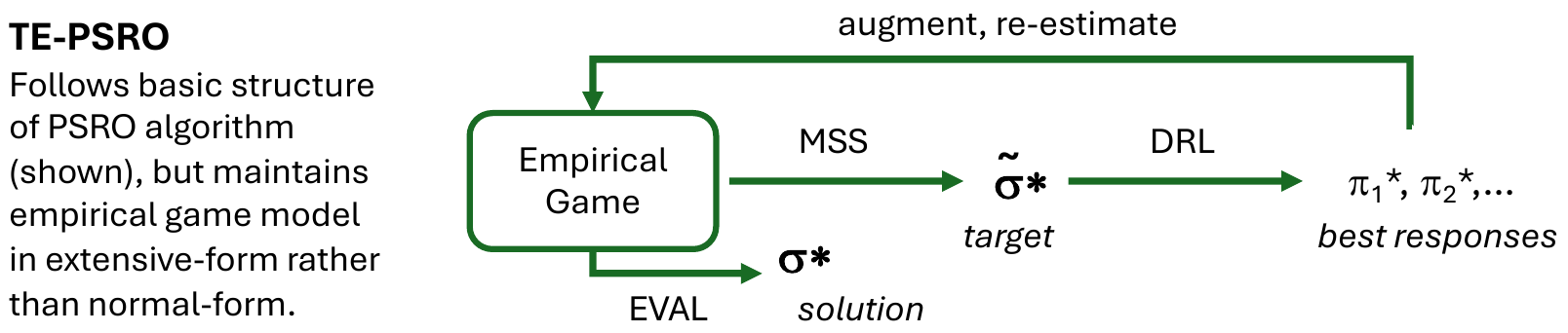}
    \caption{Basic PSRO loop.
    In each iteration (or \term{epoch)}, an empirical game model is extended, based on best responses (BRs) to target profile $\targprof$ derived from the current empirical game by the solver MSS.
    The BRs are computed using deep RL applied to the game simulator. EVAL is a solver (not necessarily the MSS) applied to a model to assess its quality.
    }
    \label{fig:psro}
\end{figure}

A key step in PSRO is augmenting the empirical game model with new BR results.
This is straightforward for a normal-form model: add the new strategies to the game matrix, and estimate payoffs for new profiles using simulation.
For PSRO using an extensive-form model, where we can no longer treat a BR as an atomic entity, we face new  questions such as the following. In what precise sense does the empirical game tree \textit{coarsen} or \textit{abstract} the underlying multiagent scenario? Relatedly, how should we incorporate elements of the BRs (detailed policy specifications) at appropriate places in the empirical game tree?
The way \citet{konicki22} address these challenges in their approach towards tree-exploiting PSRO (TE-PSRO) is by systematically coarsening away select (non-strategic) stochastic events in the underlying game. This approach is not general or scalable in the sense that it relies on the use of stochastic events to model imperfect information in the underlying game.
In this paper, we reformulate TE-PSRO by developing a method to abstract broad swaths of the \textit{state} and \textit{observation} spaces of the underlying game, providing a more implicit rendering of games with complex information structure, including high degrees of imperfect information. We also introduce other methodological advances that enhance the power of TE-PSRO in many directions.

First, to address the abstraction issue, we use two distinct formulations for the game of interest.
The \textit{underlying game as represented by the simulator} is defined in terms of a state space, agent actions, and the observations, successor states, and rewards (stochastically) resulting from applying actions in a given state.
This formulation is the natural one for DRL algorithms, which can interact directly with the simulator.
On the other hand, the \textit{empirical game model} employs an extensive-form tree representation, which is the natural object of game-solving algorithms.
To bridge the two formulations, the 
edges in the empirical game tree correspond to \term{abstract policies} executable in the simulator.
These abstract policies, derived through DRL, map agent observations to actions.%
\footnote{\citet{McAleer21} define a variant of XDO called Neural XDO that likewise employs policies represented as neural networks.
Rather than incorporate these as abstract policies in an empirical game tree, Neural XDO instead relies on methods like neural fictitious self-play \citep{heinrich2016deep} that perform game analysis directly in the space of neural network policies.}
The empirical game includes only select elements of this history, rendering much of the simulator state space and observation space implicit in this abstracted formulation.
At what level to incorporate observation details in the game model is a design choice, entailing tradeoffs in computation and fidelity.

Second, we face additional computational tradeoffs regarding how much to elaborate the game model based on DRL computations. In each iteration (or \textit{epoch}), PSRO solves the current empirical game using a \term{meta-strategy solver} (MSS).
It then derives an approximate best response for each player using DRL, assuming the other players follow the latest MSS solution. 
A straightforward approach would apply the derived response throughout the empirical game \citep{konicki22,McAleer21}.
This, however, could lead the game tree to grow at an exponential rate, severely limiting the feasible number of PSRO epochs.
We propose to control this growth by adding the new DRL policies to only a select set of information sets in the empirical game. 

A final question we address regards the choice of MSS for TE-PSRO.
Previous research has considered a range of MSSs that operate on normal-form games, and observed a significant impact on PSRO efficacy \citep{psro17,Wang22mw}. We now have the opportunity to consider new MSSs that exploit the tree structure of an empirical game in extensive form, in particular, refined Nash equilibria.

To illustrate and evaluate our upgraded TE-PSRO approach, we employ two (perfect-recall) games with significantly different imperfect information structures that non-trivially extend stylized games from the literature. In our first game that we call \barg, two players with private valuations over a set of indivisible items negotiate how to split the set up between them through an alternating-offer protocol.
The scenario features imperfect information about the other party's valuation as well as choices for signaling regarding the value of a private \textit{outside option} that each player has recourse to in the event of negotiation failure. This sequential bargaining game extends a well-known two-party multi-issue negotiation task \citep{devault_dond15,lewis_dond17,fatima2014principles,li_dond23}. 

The second game is a general-sum abstraction of the card game Goofspiel \cite{fixx_goof72} that we call \gengoof. As a clean model of multi-round multiagent interactions with considerable strategic depth, Goofspiel has been extensively analyzed in the game-theory literature, more recently serving as a testbed for game-playing AI algorithms \citep{bosansky_sim16}. \gengoof proceeds over an arbitrary number of rounds, each including a discrete stochastic event (defined over a support diminishing every round by the single realized outcome) followed by all players choosing one discrete action each (effectively simultaneously); the payoff of each player at game termination is the sum of arbitrary per-round rewards. 

Our key contributions are:
\begin{itemize}
    \item A general scheme for abstract policy representation that supports flexible implementation of TE-PSRO for complex games of imperfect information (\S\ref{sec:abstraction} and~\S\ref{sec:best_response}).
    \item An approach to control the growth of empirical game trees through selective incorporation of best responses at particular information sets (\S\ref{sec:best_response_augment} and App.~\ref{sec:expand_game_tree}).
    \item A new algorithm that computes a subgame perfect equilibrium (SPE) of an imperfect-information game (\S\ref{sec:SPE}).
    \item Experimental demonstration of the efficacy of SPE over NE as MSS for TE-PSRO on a variety of complex sequential games of imperfect information (\S\ref{sec:game_description} and \S\ref{sec:experiments}); our experiments address three aspects of the complete TE-PSRO loop: the effectiveness of our augmentation heuristic in controlling the empirical game growth rate, the power of our SPE computation algorithm, and a comparison of MSS choices including refined equilibria which are feasible only for extensive-form empirical games. 
\end{itemize}
The code base used for our experiments is available at \url{https://github.com/ckonicki-umich/AAMAS25}.


\section{Technical Preliminaries}\label{sec:prelim}

An \term{extensive-form game} (EFG) is a tuple 
\[G=~\langle~N,~H,~V,~\{\mathcal{I}_j\}_{j=0}^n,~\{A_j\}_{j=1}^n,~X,~P,~u~\rangle,\] 
 where $N = \{0, \dotsc, n\}$ is the set of players or agents; $H$ is a finite tree of histories divided into subsets of \term{terminal nodes} or leaves $Z$ and decision nodes $D$; $V$ is a function assigning each decision node $h$ to an acting player; $\cA_j(\cdot)$ is the set of actions available at each decision node; $u$ is a function mapping each $z \in Z$ to a \term{utility} vector $\{u_j(z)\}_{j=1}^n$;
in games of imperfect information, the set $\cI_j$ is a partition of $V^{-1}(j)$ where each $I \in \cI_j$ is an \term{information set} (or \term{infoset}) of $j$. All nodes $h \in I$ are indistinguishable to player~$j$, meaning their action spaces are also indistinguishable and denoted $A_j(I)$. 
The directed edge connecting any $h \in I$ to its child represents a transition resulting from $V(h)$'s move. We assume 
\term{perfect recall} \citep[Definition 5.2.3]{shoham2008multiagent}. 
A node~$h$ where $V(h) = 0$ is called a \term{chance node} controlled by Nature, with a set of possible outcomes $X(h)$ and probability distribution $P(\cdot \mid h)$ over $X(h)$.

Since the underlying or \term{true game} corresponding to the simulator is too large to be represented directly with a tree, we instead express it in a state-action formulation.
A play of the game is a sequence of actions taken by the players (including Nature), where each action leads to a \term{world state} $w \in \cW$. 
The joint space of actions is given by $\cA = \bigotimes_{j = 1}^n \cA_j$, and the set of legal actions for agent $j$ at world state~$w$ is given by $\cA_j(w) \subseteq \cA_j$.
The probability distribution of the world state $w'$ following joint action $a = (a_1, \dotsc, a_n) \in \cA$ taken in world state $w$ is given by a transition function $\cT(w, a) \in \Delta^{\cW}$. 
Upon transitioning to $w'$ from $w$ via $a$, agent $j$ makes a partial \term{observation} $o_j = \cO_j(w, a, w')$ instead of fully observing $w'$. A reward $\cR_j(w)$ is given to agent $j$ at each $w \in \cW$, and the game ends when a \term{terminal world state} is reached. 
In this formulation, a history at time~$t$ is the sequence of world states and actions given by $h = (w^1, a^1, \dotsc, w^t)$; histories $z \in Z$ where $w^t$ is a terminal world state are terminal histories.
It follows that $R_j(h)$ and $\cA_j(h)$ are the reward and action space for agent~$j$ in the last world state of history~$h$.
An \term{information state} (or \term{infostate}) for agent~$j$, denoted $I_j$, is a sequence of agent~$j$'s observations and actions up to a point~$t$ in the game, given by $I_j(h) = (a^1_j, o^2_j, a^2_j, \dotsc, o^t_j) \equiv I_j$. 
Since agent~$j$ cannot distinguish between the histories of $I_j(h)$, it follows that $\cA_j(I_j(h)) = \cA_j(h)$. The complete set of infostates for player~$j$ is again given by $\cI_j$. 
We use hatted symbols to denote components of the empirical game tree (e.g., $\hat{\cI_j}$ for infosets) to distinguish them from analogous components of the true game.

A \term{pure strategy} for player~$j$ specifies the action that $j$ selects at each information set.
A \term{mixed strategy} $\sigma_j$ defines a probability distribution over the action space at each of $j$'s information sets.
A \term{strategy profile} is given by $\bsigma = (\sigma_1, \dotsc, \sigma_n)$, and $\bsigma_{-j}$ denotes the strategies of all players other than $j$. 
$\Sigma_j$ is the set of all strategies available to player~$j$, and $\Sigma = \bigtimes_{j = 1}^n \Sigma_j$ denotes the space of joint strategy profiles. A terminal history $z$ is reached by $\bsigma$ with a \term{reach probability} $r(z, \bsigma) = \prod_{j \in N} r_j(z, \sigma_j)$ where $r_j(z, \sigma_j)$ is the probability that player~$j$ chooses actions that lead to $z$, including Nature's contribution $r_0(z)$. 
The \term{payoff} of $\bsigma$ to player~$j$ is given by its expected utility $U_j(\bsigma)$. 
Player~$j$'s \term{regret} from playing~$\sigma_j$ as part of~$\bsigma$ is given by
$\regret_j(\bsigma) = \max_{\sigma \in \Sigma_j} U_j(\sigma, \bsigma_{-j}) - U_j(\bsigma)$. 
The profile regret of $\bsigma$ is the sum of player regrets: $\regret(\bsigma) = \sum_{j=1}^n \regret_j(\bsigma)$. 
A strategy profile $\bsigma$ with $\regret(\bsigma) = 0$ is a \term{Nash equilibrium} (NE).

\section{Description of Games Studied}\label{sec:game_description}
We will now describe in detail the two games used in our experimental assessment of TE-PSRO in \S\ref{sec:experiments}. The game analyst using TE-PSRO has no direct access to a game description at this level of detail, but can query a simulator based on such a description for data samples relevant to game histories induced by input strategy profiles.

\subsection{\barg}\label{sec:DOND}

In this game, two players negotiate the division of $\numItems$ discrete items of $\tau$ 
types.
We represent the item pool by a vector $\mathbf{p}$ where the $i^\mathrm{th}$ entry $p_i$, $i \in \{1,\dotsc,\tau\} \equiv [\tau]$, is the number of items of type~$i$; $\sum_{i=1}^\tau p_i = \numItems$. 
Each player~$j \in \{1,2\}$ has a private valuation over the items given by a vector $\mathbf{v}_j$ of non-negative integers such that the $i^\mathrm{th}$ entry $v_{j,i}$ is player~$j$'s value for one item of type~$i$. 
In each game instance, $(\mathbf{v}_1, \mathbf{v}_2)$ are sampled uniformly at random from the collection $\cV$ of all vector pairs satisfying three constraints. 
First, for both players, the total value of all items is a constant: $\mathbf{v}_j \cdot \mathbf{p} = \bar{V}$, $j\in\{1,2\}$. 
Second, each item type must have nonzero value for at least one player: $\forall i \in [\tau] . v_{1, i} + v_{2, i} > 0$. 
Finally, some item type must have nonzero value for both players: $\exists i \in [\tau] . v_{1, i} v_{2, i} > 0$.


An additional feature of our game is that each player~$j$ has a private \term{outside offer} in the form of a vector of items $\mathbf{o}_j$,
defining the fallback payoff the player obtains if no deal is reached.
This offer is drawn from a distribution $P_j(\cdot)$ at the start of each game instance. During negotiation, a player~$j$ may choose to reveal coarsened information about its outside offer to the other player in the form of a binary signal which is $L$ (resp.~$H$) if the value of the offer $\mathbf{o}_j \cdot \mathbf{v}_j$ is at most (resp.\ greater than) a fixed threshold~$\thresh$ where $1 < \thresh < \bar{V}$.

In each of a finite number $T > 0$ of negotiation rounds, the players take turns proposing a partition of the pool between themselves, with player~1 moving first in each round. 
In its turn, a player can accept the latest offer from the other player (\textsc{deal}), end negotiations (\textsc{walk}), or make an offer-revelation combination of the form $(\offer,R)$. 
Offer $\offer \in \{ (\mathbf{p}_1, \mathbf{p}_2) \mid \mathbf{p}_1 + \mathbf{p}_2 = \mathbf{p} \}$ is a proposed partition of the items, with $\mathbf{p}_j$ a vector of $\tau$ non-negative integers representing player~$j$'s share.
Revelation $R\in \{ \T, \F \}$ represents that player's decision to either disclose its signal (\T) in that turn or not (\F). 
We also include a discount factor $\gamma \in (0,1]$ to capture preference for reaching deals sooner. 
Negotiation fails if a player chooses \textsc{walk} in any round $\rho \in\{1,\dotsc,T\}$ or $T$ rounds pass without any player choosing \textsc{deal}. 
In case of failure in round $\rho$, each player~$j$ receives a reward of $\gamma^{\rho} \mathbf{o}_j \cdot \mathbf{v}_j$ from its  outside offer. 
If a proposed partition $(\mathbf{p}_1, \mathbf{p}_2)$ is accepted in round $\rho$, then the reward to~$j$ is $\gamma^{\rho-1} \mathbf{p}_j \cdot \mathbf{v}_j$.
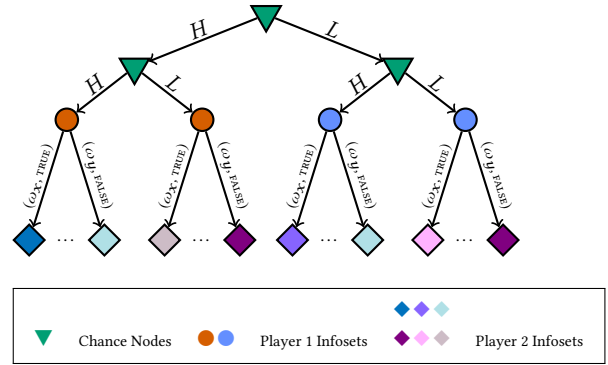
\begin{figure}[h!]
    \centering
\begin{subfigure}[t]{\textwidth}
\begin{tikzpicture}[thick,
    level 1/.style = {level distance = 7mm, sibling distance = 35mm},
    level 2/.style = {level distance = 7mm, sibling distance = 18mm},
    level 3/.style = {level distance = 16mm, sibling distance = 10mm},
    engine/.style = {inner sep = 1pt, above}]
    \node [draw, black, fill={chance}, regular polygon, regular polygon sides=3, rotate=180, inner sep=0.08cm] {}
    [black, ->]

    child {node [draw, black, fill={chance}, regular polygon, regular polygon sides=3, rotate=180, inner sep=0.08cm] (1) {} 
      child {node [draw, black, fill={player1}, circle] (2) {} 
        child {node [draw, black, fill={player2}, diamond] (3) {}
        edge from parent node[engine, sloped] {{\tiny $(\offer_x, \T)$}}}
        child {node [draw, black, fill={player2_2}, diamond] (4) {}
        edge from parent node[engine, sloped] {{\tiny $(\offer_y, \F)$}}}
      edge from parent node[engine, sloped] {$H$}}
      child {node [draw, black, fill={player1}, circle] (6) {} 
        child {node [draw, black, fill={player2_5}, diamond] (7) {}
        edge from parent node[engine, sloped] {{\tiny $(\offer_x, \T)$}}}
        child {node [draw, black, fill={player2_3}, diamond] (8) {}
        edge from parent node[engine, sloped] {{\tiny $(\offer_y, \F)$}}}
      edge from parent node[engine, sloped] {$L$}}
      edge from parent node[engine, sloped] {$H$}}
    child {node [draw, black, fill={chance}, regular polygon, regular polygon sides=3, rotate=180, inner sep=0.08cm] (10) {}
    child {node [draw, black, fill={player1_1}, circle] (11) {} 
        child {node [draw, black, fill={player2_1}, diamond] (12) {}
        edge from parent node[engine, sloped] {{\tiny $(\offer_x, \T)$}}}
        child {node [draw, black, fill={player2_2}, diamond] (13) {}
        edge from parent node[engine, sloped] {{\tiny $(\offer_y, \F)$}}}
      edge from parent node[engine, sloped] {$H$}}
      child {node [draw, black, fill={player1_1}, circle] (15) {} 
        child {node [draw, black, fill={player2_4}, diamond] (16) {}
        edge from parent node[engine, sloped] {{\tiny $(\offer_x, \T)$}}}
        child {node [draw, black, fill={player2_3}, diamond] (17) {}
        edge from parent node[engine, sloped] {{\tiny $(\offer_y, \F)$}}}
      edge from parent node[engine, sloped] {$L$}}
      edge from parent node[engine, sloped] {$L$}
    };

    \path (3) -- node[auto=false]{{\footnotesize \ldots}} (4);
    \path (7) -- node[auto=false]{{\footnotesize \ldots}} (8);
    \path (12) -- node[auto=false]{{\footnotesize \ldots}} (13);
    \path (16) -- node[auto=false]{{\footnotesize \ldots}} (17);
\end{tikzpicture}
\end{subfigure}

\begin{subfigure}[t]{\textwidth}
\begin{flushleft}
\vspace{0.1em}
    \fbox{\begin{tabular}{rlrlrl}
& & & & \scriptsize{\textcolor{player2}{\DiamondSolid} \textcolor{player2_1}{\DiamondSolid} \textcolor{player2_2}{\DiamondSolid}} & \\ 
\scriptsize{\textcolor{chance}{\TriangleDown}} & \scriptsize Chance Nodes &
\scriptsize{\textcolor{player1}{\CircleSolid} \textcolor{player1_1}{\CircleSolid}} & \scriptsize Player 1 Infosets &
\scriptsize{\textcolor{player2_3}{\DiamondSolid} \textcolor{player2_4}{\DiamondSolid} \textcolor{player2_5}{\DiamondSolid}} & \scriptsize Player 2 Infosets
\end{tabular}}
\end{flushleft}
\end{subfigure}

\caption{Part of game tree llustrating the effect of Player 1's decision $R$ on Player 2's infoset structure.
}
\label{fig:DOND_no_offer_reveal}
\end{figure}

Fig.~\ref{fig:DOND_no_offer_reveal} displays a partial extensive-form representation of a simulation of \barg after each player's valuation has already been sampled.
At the first two levels, the simulator samples outside offer signals for the two players from $\{H,L\}$; if $L$ (resp. $H$) is drawn for player~$j$, its actual outside offer is sampled uniformly at random from all possible item vectors $\mathbf{o}$ of cumulative value $\mathbf{o} \cdot \mathbf{v}_j$ above $0$ and at most the threshold $\thresh$ (resp. above $\thresh$ and at most $\bar{V}$). Thus, setting $P_j$ reduces to picking a probability of the signal being $H$ for player~$j$.
Next, player~$1$ chooses an action, comprising an offer $\offer$ and revelation $R$. Player~$2$ now has four distinguishable histories that result when player~$1$ reveals its signal and two non-singleton information sets that result when player~$1$ chooses not to reveal.
The game continues (not shown) with the action of player~$2$, and alternation between the players for another $T-1$ rounds.

\subsection{\gengoof}\label{sec:GenGoof}
\gengoof is parametrized by a positive integer $K$ which determines the number of game rounds ($K-1$) as well as the size of each player's action space ($K$).

First, a discrete stochastic event with $K$ outcomes occurs at the game root, that is, $\abs{X(\emptyset)}=K$.
Let $e_1$ denote the realized outcome of this event.
Player~$1$ observes $e_1$ and chooses one of $K$ actions, say $a^k_1$, from $\cA_1(I(e_1))=\{ a^k_1 \}_{k=1}^K$. Then, player~$2$ observes $e_1$ but not player~$1$'s action, say $a^k_2$, and also chooses one of $K$ actions from $\cA_2(I(e_1 a^k_1)) = \{ a^k_2 \}_{k=1}^K$, ending round~$1$.
Round~$2$ begins with the realization $e_2$ of a second stochastic event with support $X\left(e_1 a^k_1 a^k_2\right) = X(\emptyset) \setminus \{ e_1 \}$. Player~$1$ then observes the history up to and including $e_2$ before choosing one of $K$ actions, followed by player~$2$ who observes all but player~$1$'s second chosen action. 
This process repeats until the final round $K-1$ where a stochastic event with only $2$ possible outcomes occurs, followed by player~1 and player~2 both observing the history until the final stochastic event realization and picking one of $K$ actions each, ending the game. 

To complete the game description, we define the probability distribution of each stochastic event and each leaf utility as follows (these are also game parameters that are hidden from the game analyst).
For each instance of \gengoof, we sample a single $K$-outcome categorical probability distribution uniformly at random for the stochastic event in round~$1$; for $k\in\{2,3,\dots,K-1\}$, the distribution of the round-$k$ stochastic event is obtained by renormalizing the round-$(k-1)$ over the residual support after eliminating the outcome realized in round $(k-1)$. For example, the probability distribution of the round-$2$ stochastic event given that $e_1$ occurred in round $1$ is
\begin{equation*}
    P\left(e_2 \given e_1 a^k_1 a^k_2 \right) = \frac{P(e_2 \given \emptyset)}{\sum\limits_{e' \in X(\emptyset) \setminus \{ e_1 \}} P(e' \given \emptyset)} \quad \forall e_2 \in X(\emptyset) \setminus \{ e_1 \}.
\end{equation*}

For each possible combination of the stochastic event outcome and the two players' action choices in each round of any game instance, we choose a reward for each player uniformly at random from $[0, u_{\max}]$ for a positive real number $u_{\max}$; we set the leaf utility for each player equal to the sum of the player's rewards over all $K - 1$ rounds in the corresponding history. Thus, for every leaf $z \in Z$ and player $j\in\{1,2\}$, $u_j(z) \sim \Udist([0,u_{\max}(K-1)])$ where $\Udist(\cS)$ denotes the uniform distribution over the set $\cS$.
Figure~\ref{fig:gengoof_chap3} in App.~\ref{sec:gengoof4_diag} illustrates the first round of gameplay in a particular instance of \gengoof.


\section{Tree-Exploiting PSRO}\label{sec:method}
Our domain of interest comprises game instances where expanding the full game out as an extensive-form tree for the purpose of game analysis is infeasible. TE-PSRO tackles this challenge by maintaining a coarsened and abstracted (yet extensive-form) version as its empirical game model.
The full game is specified in the form of a gameplay simulator, which is formalized in terms of the world state framework (\S\ref{sec:prelim}).
A key question for TE-PSRO (Fig.~\ref{fig:psro}) is how to translate new best-response results into elements that can be systematically incorporated into an abstracted empirical game model as part of the model augmentation operation.
Our approach bridges the detailed state-space model of the simulator and the simplified game tree using the concept of abstract policies.


\subsection{Abstract Policies}\label{sec:abstraction}

In the underlying game, the space of possible infostates of player~$j$ is given by $\cI_j$, and a policy $\pi_j$ specifies $j$'s action $a \in \cA_j(I)$ for each $I \in \cI_j$.
We represent such policies in our implementation by neural networks,  encoded as a set of weights for a given architecture.
In the empirical game~$\hat{G}$, player~$j$'s information possibilities are described by its infosets~$\hat{\cI_j}$.
In general, there need be no particular structural relationship between $\cI_j$ and~$\hat{\cI_j}$, though typically they will both be defined in terms of a shared set of primitive observations.
We capture the connection by a function $\infomap:\cI_j\to \hat{\cI_j}$, where $\infomap(I_j)$ is the empirical game infoset corresponding to underlying infostate~$I_j$.

Given the distinct formulations, policies $\pi_j$ are executable in the simulator, but cannot be directly interpreted within the framework of empirical game~$\hat{G}$. 
Nevertheless, we can incorporate them as actions in~$\hat{G}$.
For a given policy $\pi^x_j$, we treat the label ``$\pi^x_j$'' as a potentially allowable action for any infoset~$\hat{\cI_j}$. 
From the empirical game perspective, ``$\pi^x_j$'' is an abstract policy.
An overall game-tree strategy specifies an action for every infoset in~$\hat{\cI_j}$.
To execute a game-tree strategy profile in the simulator, we simply trace through the tree, applying selected abstract policies at each infoset.
The selected abstract policy remains in force for player~$j$ until a new infoset is reached where it is $j$'s turn to move.
Though uninterpreted in the game tree itself, these abstract policies have full access to the information state from the simulation needed for execution.

\subsection{Best Response: Deep Reinforcement Learning}\label{sec:best_response}

With the ability to simulate profiles over the empirical game strategy space, we can employ the simulator within a deep RL algorithm to derive best responses~$\piBR_j$.
Our implementation employs the DQN algorithm \citep{mnih2015dqn}, which combines a feed-forward neural network parameterized by $\vartheta$ with temporal difference learning and a second target network to estimate Q-values over time given $I \in \cI_j$.


 As an illustration, we provide details of our deep Q-network for \barg. The input to the neural network representing~$\piBR_j$ for this game is an encoding of player~$j$'s current information state~$I$. 
 $\lceil \log_2(\bar{V}) \rceil$ bits are allotted for player~$j$'s valuation $\mathbf{v}_j[i]$, for each $i \in [\tau]$. One bit is allotted for player~$j$'s outside offer signal. For each player's turn in the game, $\mathbf{p}[i] + 1$ bits are allotted per item type $i \in [\tau]$ to represent a partition of $\mathbf{p}$, plus one bit for the decision to reveal the signal or not. 
 Two bits are allotted to represent the other player's signal: \textsf{00} means no reveal so far, \textsf{01} means~$L$, and \textsf{10} means~$H$. 
 One final bit is allotted to be set to \textsf{1} when negotiations are complete.
 The output of the network is an $\vert \cA_j \vert$-long vector containing the Q-values of each action in $\cA_j$ given the input infostate vector. Our parameter settings, optimized via hyperparameter tuning, are included in App.~\ref{app:dqn}. After successfully training player~$j$'s DQN, the learned weights of $\pi_{\vartheta}$ are saved and mapped to the abstract policy label ``$\pi^x_j$'' in~$\hat{G}$ (\S\ref{sec:abstraction}).

\subsection{Augmenting the Empirical Game Model}\label{sec:best_response_augment}

Given BRs $\piBR$ computed from DRL, the next step is to augment the empirical game~$\hat{G}$ (Fig.~\ref{fig:psro}).
Though an abstract policy is potentially applicable at any point in the game tree, adding $\piBR_j$ to every $I \in \hat{\cI}_j$ could lead to unsustainable growth in~$\hat{G}$.

Our approach is to select a fixed number $M$ of infosets to augment for each player.
Our selection is based on an assessment of the \term{gain}~$\gain$ of playing $\piBR_j$ instead of $\targprof_j$ at candidate infosets $I \in \hat{\cI}_j$ where $\targprof$ is the  BR target at the current TE-PSRO epoch.
Recall (\S\ref{sec:prelim}) that a terminal history in the underlying game is expressed as a sequence of world states and actions: $z = (w^1, a^1, \ldots, w^t)$, and that $R_j(z)$ is the associated reward for player~$j$.
The reach probability of $z$ given that history $h$ of length $t_h$ was reached is
\begin{equation*}
    r(z \mid h, \hat{\bsigma}) = \prod_{\ell = t_h}^{t} \hat{\bsigma}(\infomap(w^{\ell}))(\pi^x) \cdot \mathds{1}\{ \pi^x(w^{\ell}) = a^{\ell} \} \cdot T(w^{\ell}, a^{\ell})(w^{\ell + 1}).
\end{equation*}
The expected payoff to $j$ for playing $\hat{\sigma}_j$ in response to $\targprof_{-j}$ at $I \in \hat{\cI}_j$ is given by
\begin{equation*}
    U_j(\hat{\sigma}_j,\targprof_{-j}, I) = \sum\limits_{\substack{h \in H\mid \infomap(h) \in I}} \sum\limits_{z \in Z} r(z \mid h, \hat{\sigma}_j,\targprof_{-j}) R_j(z).
\end{equation*}

Let $\left.\hat{\bsigma}\right|_{I \rightarrow \piBR_j}$ denote a strategy profile identical to $\hat{\bsigma}$ except that player~$j$ selects $\piBR_j$ at $I$. Gain $K$ is equal to the product of the gain to player~$j$ of $\left.\targprof\right|_{I \rightarrow \piBR_j}$, given that $I$ is reached, and the probability of reaching the set of histories in the underlying game that translate into $I$:
\begin{equation*}
    \gain = r(I, \targprof) \cdot \left( U_j(\left.\targprof\right|_{I \rightarrow \piBR_j}, I) - U_j(\targprof, I) \right).
\end{equation*}

We then perform a softmax selection of $M$ infosets based on the gains $\left[ \gain_I \right]_{I \in \hat{\cI}_j}$.
BR policy $\piBR_j$ is then added as an action edge to each of these infosets.
The process creates new infosets, depending on the observable effects of the abstract policy.
We illustrate how the empirical game tree is extended by this method in App.~\ref{app:te_psro_snapshots}, using \barg as an illustrative example.

Finally, TE-PSRO updates payoff estimates for the augmented~$\hat{G}$ by simulating the strategy combinations that result from the newly added edges and recording the sampled payoffs. 
All epochs are allocated the same total number of gameplay simulations, called the \textit{simulation budget}, distributed equally among all \textit{new} strategy profiles. Thus, the number of samples per profile is fixed based on the TE-PSRO epoch, independent of the choice of~$M$.


\section{Computing Refined Nash Equilibria}\label{sec:SPE}

Tree-based game models afford consideration of solution concepts specific to the extensive form.
We specifically investigate the use of \term{subgame perfect equilibrium} (SPE), a refinement of NE that rules out solutions containing non-credible threats.
To make use of SPE, we need a definition that applies to games of imperfect information, and an algorithm that computes such solutions.

\begin{definition}\label{def:subgame}
A \term{subgame} of game $G$ is a directed rooted subtree given by $G'= \langle N, H', V', \{\mathcal{I}'_j\}_{j=0}^n, \{A'_j\}_{j=1}^n, X', P', u' \rangle$ satisfying the following:
\begin{itemize}
    \item The root $h'$ of tree $H'$ must be the only node in its information set.
    \item As a subtree of $H$, $H'$ must include all nodes in $H$ that succeed~$h'$.
    \item For any $j \in N$ and for all $I \in \mathcal{I}_j$, if $I \in \mathcal{I}'_j$, then the nodes $h \in I$ must all be part of $H'$; if $I \notin \mathcal{I}'_j$, then all its nodes must be part of $H \setminus H'$.
    \item $V'$, $\{\mathcal{I}'_j\}_{j=0}^n$, $\{A'_j\}_{j=1}^n$, $X'$, $P'$, and $u'$ are restrictions to $H'$ of $V$, $\{\mathcal{I}_j\}_{j=0}^n$, $\{A_j\}_{j=1}^n$, $X$, $P$, and $u$, respectively.
\end{itemize}
\end{definition}


\begin{definition}[\citep{selten75}]\label{def:SPE}
A \term{subgame perfect equilibrium} (SPE) of game $G$ is an NE of $G$ that also induces NE play in each of $G$'s subgames. 
\end{definition}
In finite perfect-information EFGs, an SPE always exists in pure strategies and can be readily computed using the classic backward induction approach. 
With imperfect information, however, a subgame may not admit a pure-strategy NE at all. 
\citet{kaminski19} proposed the \term{generalized backward induction} (GBI) approach for finding the set of SPE for a potentially infinite game of imperfect information.
A key feature of GBI is the re-expression of the game tree as a set of its proper subgames organized by their roots. 
Other crucial implementation details are not fully specified in the original article, in particular how to identify NE of subgames that include non-singleton information sets; a na\"{\i}ve implementation using exhaustive enumeration of strategy profiles for combinations of subgames has a runtime that is exponential in game size.

We provide a practical, modular algorithm for finding an SPE via GBI in a finite, imperfect-information EFG.
Our algorithm combines dynamic programming with a Nash solver subroutine, using \citeauthor{kaminski19}'s [\citeyear{kaminski19}] idea of organizing the game into subgames. 
Alg.~\ref{alg:compute_spe} presents our method.
\textsc{ComputeSPE} uses subroutines described here at a high level (see App.~\ref{app:SPE} for full pseudocode). 

\begin{algorithm}[!htb]
\small
\caption{: \textsc{ComputeSPE}}
\label{alg:compute_spe}
\begin{algorithmic}[1]
\Require{Input game $G$}

\State{$\Psi \gets \textsc{GetSubgameRoots}(G)$}
\State{$\ell \gets$ height of $h_0$ in $\Psi$}
\State{$\{ \Theta_k \}_{k = 1}^{\ell} \gets \textsc{GetSubgameGroups}(G, \Psi, \ell)$}

\State{$\bm{\sigma}^{SPE} \gets \textsc{GetInitialSPE}(G, \Theta_1)$}
\For{$1 < k \leq \ell$}
\For{$\theta \in \Theta_k$}
\State{Extract $\left.\bm{\sigma}^{SPE}\right|_{G_{\theta}} \gets \{ \bm{\sigma}^{SPE}(I) \mid I \in G_{\theta} \cap \bm{\sigma}^{SPE} \}$}
\State{$\bm{\sigma}^{SPE} \gets \bm{\sigma}^{SPE} \cup \textsc{NashSolver}(G_{\theta}, \left.\bm{\sigma}^{SPE}\right|_{G_{\theta}})$}
\EndFor
\EndFor

\Return $\bm{\sigma}^{SPE}$
\end{algorithmic}
\end{algorithm}
 
 We first call \textsc{GetSubgameRoots} to find the roots of all subgames in the input game $G$ and arrange them into a tree $\Psi$ rooted at $h_0$, the root of $G$. 
 A root in $\Psi$ has a \term{height} $1 \leq k \leq \ell$ where the subgames closest to the leaves of $G$ have height~1 and $h_0$ has height~$\ell$.
 To find the roots, it is sufficient to check which nodes in $H$ are roots of subtrees satisfying the conditions of Definition~\ref{def:subgame}.
 \textsc{GetSubgameGroups} collects all subgame roots in $\Psi$ at height~$k$ into a set~$\Theta_k$, for $1 \leq k \leq \ell$. 
 Then, we use dynamic programming to iterate over the subgames of each $\Theta_k$ and solve each subgame at height $k$ directly via a chosen \textsc{NashSolver}. The union $\bm{\sigma}^{SPE}$ of all SPE found for the subgames in $G$ with height less than $k$ is updated with each new partial SPE. $G_{\theta}$ denotes the subgame rooted at node $\theta \in \Theta_k$. 
 The union of all SPE across all subgames in $G$ by definition must be the SPE of $G$. In order to avoid overwriting the SPE that have been computed for any subgames at smaller heights in $G_{\theta}$, we pass the partial SPE $\left.\bm{\sigma}^{SPE}\right|_{G_{\theta}}$ in as input to \textsc{NashSolver} and restrict the solver to find a solution only for the information sets within~$G_{\theta}$ that are not already included in $\left.\bm{\sigma}^{SPE}\right|_{G_{\theta}}$. 
 This ensures that the runtime of \textsc{ComputeSPE} is linear with respect to the size of $G$ and thus scalable modulo the runtime of \textsc{NashSolver} (see App.~\ref{app:proofs} for runtime analysis). 
 For our experiments (\S\ref{sec:experiments}), we devised an adaptation of the counterfactual regret (CFR) minimization algorithm~\citep{cfrm07}, called $\textsc{SubgameCFR}$, as our \textsc{NashSolver}.


\section{Experiments}\label{sec:experiments}
We now report experiments that we conducted to evaluate our TE-PSRO approach by applying it to the two games described in \S\ref{sec:game_description}.

\subsection{Parameter settings}
For \barg (\S\ref{sec:DOND}), we set $\tau=3$, $\bar{V}=10$, $\thresh=5$, $\gamma=0.99$, $T=5$, and $n\in\{5,6,7\}$. 
We generated five unique sets of the remaining parameters $\mathbf{p}$, $(\mathbf{v}_1, \mathbf{v}_2)$, $P_1$, $P_2$ uniformly at random from their respective supports in order to evaluate TE-PSRO's performance on a variety of game instances. For \gengoof (\S\ref{sec:GenGoof}), we set $K=4$, $u_{\max}=10$, calling this instance \gengoofK{4}.  The simulator budget was $100$ samples for \barg and $200$ samples for \gengoofK{4}.

We ran all experiments on our local computing cluster using a single core. 
Runtime and memory requirements depend on the choice of $M \in \{ 1, 2, 4, 8, 16 \}$, which determines the rate of growth of~$\hat{G}$ across TE-PSRO epochs (see App.~\ref{app:great_lakes} for details). 
Unless otherwise stated, every experiment was performed for five randomly seeded trials for each setting. Error bars in our plots correspond to a 95\% confidence interval.

\subsection{Results}\label{sec:exp_results}
Our first set of experiments assesses space requirements for the empirical game $\hat{G}$ as a function of the number $M$ of infosets augmented per epoch. 
At each epoch of TE-PSRO, we recorded the total number of information sets across players in $\hat{G}$ and the memory required by the emprical model.
We report the average empirical game size for each of the two games studied in terms of the number of player information sets of all players and memory required in megabytes (MB) 
in Fig.~\ref{fig:game_size} and Fig.~\ref{fig:game_size_app} in App.~\ref{app:game_size} respectively, for representative values of $M$; each curve for \barg is averaged over 100 trials per value of $M$ across all five sets of bargaining parameters. The broad takeaway from all plots in this set is the following. Although the rate of increase in the size of the $\hat{G}$ steepens with $M$ in the plots, its size is still manageable after many epochs of TE-PSRO. If we had added a new policy to all information sets rather than to only a subset of size $M$, $\hat{G}$ would grow to as many as $3000$ or $4000$ information sets after only five epochs of TE-PSRO, which is an unsustainable trajectory. See App.~\ref{app:game_size} for further insights on the difference between the two games.
\begin{figure}[h]
    \centering
    \begin{subfigure}[b]{0.36\textwidth}
     \caption{\barg}
    \includegraphics[width=\textwidth]{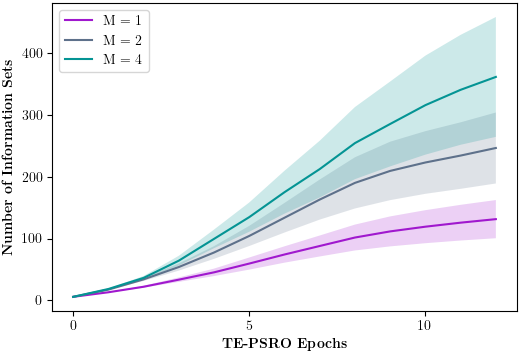}
    \label{fig:infosets_barg}
    \end{subfigure}
    \begin{subfigure}[b]{0.36\textwidth}
    \caption{\gengoofK{4}}
    \includegraphics[width=\textwidth]{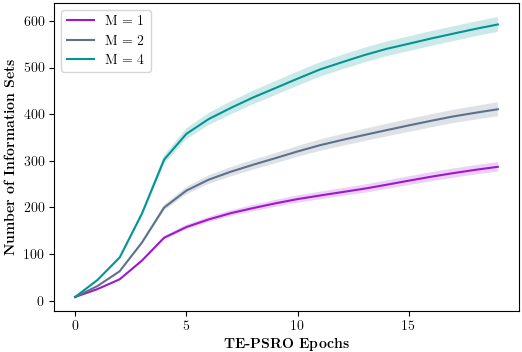}
     \label{fig:infosets_gengoof}
    \end{subfigure}
    \caption{Total number of information sets of both players in empirical game $\hat{G}$ over the course of TE-PSRO's runtime, averaged over all combinations of all parameters (except $M$) and seeds.}
    \label{fig:game_size}
\end{figure}

Our second set of experiments provides evidence for the effectiveness of our algorithm \textsc{ComputeSPE}(\S\ref{sec:SPE}) as well as the non-triviality of obtaining an SPE of an imperfect-information extensive-form game.
We ran two suites of TE-PSRO on each of \barg and \gengoofK{4}: each suite consisted of 50 trials for each $M$ setting, using NE as the MSS for 25 trials and SPE as the MSS for the remaining~25.
In one suite, we evaluated intermediate and final $\hat{G}$ (EVAL in Fig.~\ref{fig:psro}) by NE computed using CFR, and the other by SPE using Alg.~\ref{alg:compute_spe}.
We computed the regret of the respective solutions with respect to each subgame of $\hat{G}$ and reported the maximum over subgames, that is, the \term{worst-case subgame regret}; a solution with a lower value of this quantity is a better SPE approximation. Thus,  
Figs.~\ref{fig:subgame_regret_SPE} and~\ref{fig:subgame_regret_abs4_SPE} verify that Alg.~\ref{alg:compute_spe} does indeed produce an approximate SPE
Figs.~\ref{fig:subgame_regret_NE} and~\ref{fig:subgame_regret_abs4_NE} demonstrate that our regular Nash solver does not happen to stumble upon NE that are also subgame-perfect. 
Additionally, the worst-case subgame regret increases with the complexity of $\hat{G}$, reflected in both setting of $M$ and epochs of TE-PSRO. Note that these experiments are not for gauging the quality of the models produced by TE-PSRO; instead, TE-PSRO is used to generate a sequence of empirical games of increasing size and complexity (in terms of the number of non-singleton information sets) that serve as more and more challenging test cases for our game-solving algorithms.


\begin{figure*}[ht!]
    \centering
    \begin{subfigure}[b]{0.36\textwidth}
    \caption{\barg: Empirical NE}
    \includegraphics[width=\textwidth]{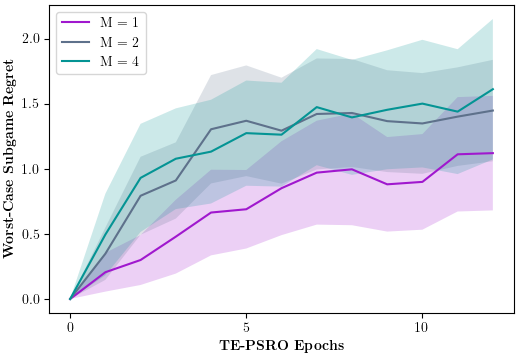}
    \label{fig:subgame_regret_NE}
    \end{subfigure}\hspace{50pt}
    \begin{subfigure}[b]{0.36\textwidth}
     \caption{\barg: Empirical SPE}
    \includegraphics[width=\textwidth]{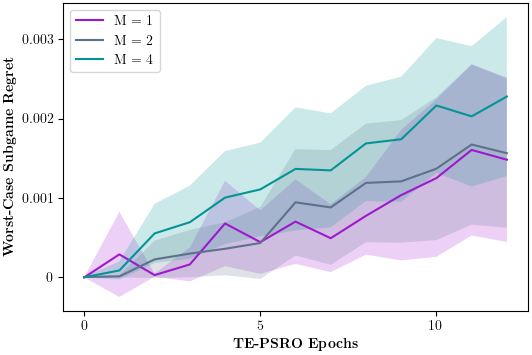}
     \label{fig:subgame_regret_SPE}
    \end{subfigure}\\
    \begin{subfigure}[b]{0.36\textwidth}
     \caption{\gengoof{4}: Empirical NE}
    \includegraphics[width=\textwidth]{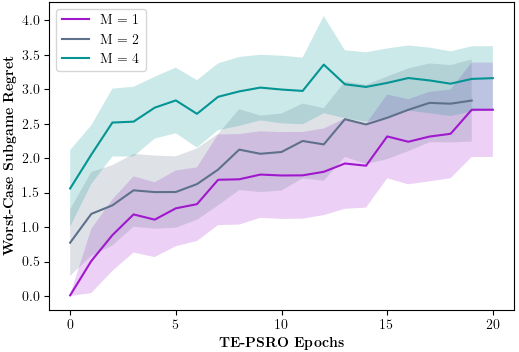}
    \label{fig:subgame_regret_abs4_NE}
    \end{subfigure}\hspace{50pt}
    \begin{subfigure}[b]{0.36\textwidth}
    \caption{\gengoof{4}: Empirical SPE}
    \includegraphics[width=\textwidth]{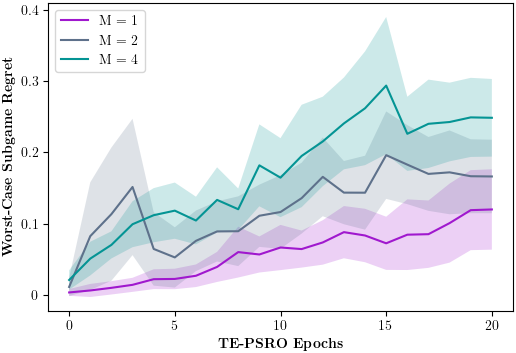}
     \label{fig:subgame_regret_abs4_SPE}
    \end{subfigure}
    \label{fig:subgame_regret_abs4_chap6}
    \caption{Average worst-case subgame regret of NE  and SPE  solutions to empirical game $\hat{G}$ for the two games studied.
    Note that the scale of vertical axis of (b) is finer than that of (a) by a factor $\approx10^3$ while the corresponding factor $\approx10$ for (d) and (c). 
    }
    \label{fig:subgame_regret}
\end{figure*}

Our final experiment set characterizes TE-PSRO performance in terms of the MSS choice (SPE vs. NE) and different values of~$M$. 
To compare  MSS choices, we computed the profile regret (\S\ref{sec:prelim}) with respect to the underlying game of the solution $\bsigma^*$ returned by EVAL in each epoch of TE-PSRO epoch; we used both NE and SPE as EVAL, giving us two sets of comparison metrics. 

Fig.~\ref{fig:true_regret} depicts our average regret results for \barg under various settings.
We ran TE-PSRO for 25 trials per value of $M$ and four combinations of choices for EVAL and MSS. 
In each trial, TE-PSRO was allowed to run for at most 30 epochs, terminating early when the computed best responses did not yield an improvement greater than 0.1 over the current solution $\bsigma^*$.
Fig.~\ref{fig:regret_M1} shows that TE-PSRO outperforms the normal-form version NF-PSRO, regardless of EVAL/MSS choice, even for $M = 1$.
Figs.~\ref{fig:regret_M8_no_NF} and~\ref{fig:regret_M4_no_NF} show that SPE beats NE as an MSS for $M \in \{ 4, 8 \}$, converging faster to near-zero regret regardless of EVAL.
Finally, Fig.~\ref{fig:regret_spe_mss_ne_eval} compares results for different $M$ settings, with NE as EVAL and SPE as MSS. 
The main takeaway is that intermediate values $M=4$ and $M=8$ outperform lower and higher settings.
Intuitively, a higher $M$ produces $\hat{G}$ with broader coverage earlier, but with a fixed sampling budget, each tree path is estimated less accurately, resulting in a non-monotonic performance with respect to $M$. 
App.~\ref{app:expts} presents the full set of plots over combinations of EVAL, MSS, and $M$.
\begin{figure*}[htb!]
    \centering
    \begin{subfigure}[b]{0.36\textwidth}
     \caption{$M = 1$, varying EVAL and MSS, NF-PSRO as baseline}
    \includegraphics[width=\textwidth]{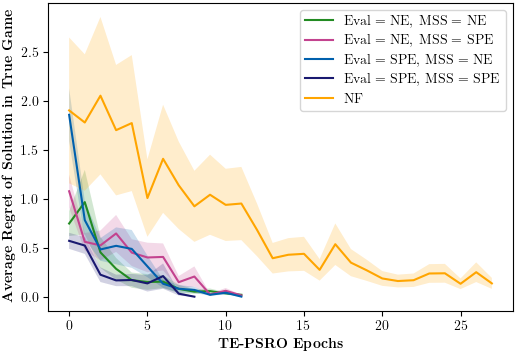}
    \label{fig:regret_M1}
    \end{subfigure}\hspace{50pt}
    \begin{subfigure}[b]{0.36\textwidth}
    \caption{$M = 8$, varying EVAL and MSS}
    \includegraphics[width=\textwidth]{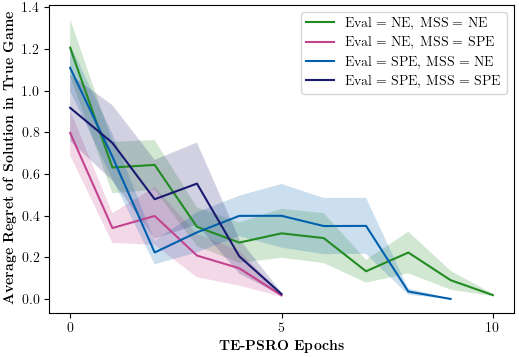}
    \label{fig:regret_M8_no_NF}
    \end{subfigure}
    \begin{subfigure}[b]{0.36\textwidth}
    \caption{$M = 4$, varying EVAL and MSS}
    \includegraphics[width=\textwidth]{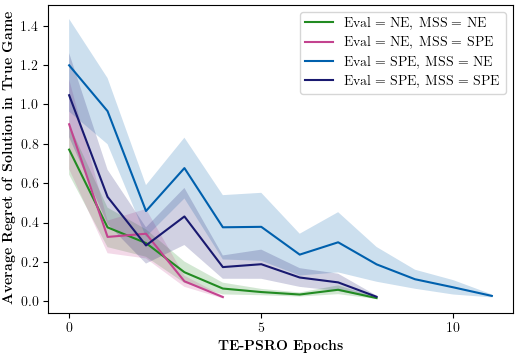}
    \label{fig:regret_M4_no_NF}
    \end{subfigure}\hspace{50pt}
    \begin{subfigure}[b]{0.36\textwidth}
    \caption{NE for EVAL, SPE for MSS, varying $M$}
    \includegraphics[width=\textwidth]{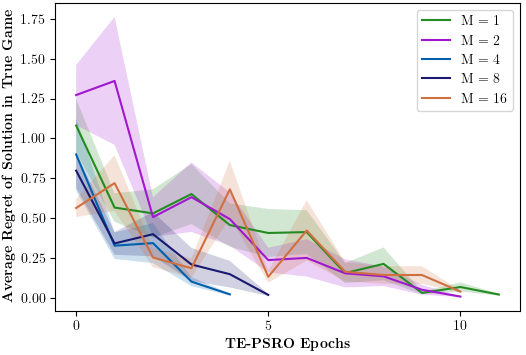}
    \label{fig:regret_spe_mss_ne_eval}
    \end{subfigure}
    \caption{Average regret of solution $\bsigma^*$ of empirical game for \barg over iterations of TE-PSRO, using NE or SPE as the MSS or EVAL and different values of $M$.}
    \label{fig:true_regret}
\end{figure*}

We assessed the statistical significance of the MSS comparisons for \barg using a permutation test, for each setting of $M$ and EVAL. 
Our figure of merit is the area $\cA_\text{MSS}$ under the regret curve, 
calculated starting from TE-PSRO epoch $5$ to avoid the noisy startup phase.
Our test statistic is $\Delta_\text{MSS} = \cA_{\textsc{NE}} - \cA_{\textsc{SPE}}$, resampled over 1000 permutations of the MSSs. 
The $p$-value is the fraction of times the difference under permutation (i.e., a null hypothesis that the MSSs are equally effective) is greater than the observed $\Delta_\text{MSS}$.
For $M = 4$, the superiority of SPE as MSS is significant ($p = 0.007$) with NE as EVAL. 
For $M = 8$, SPE as MSS is significantly better both for NE ($p = 0.006$) and SPE ($p = 0.021$) as EVAL. 
The results are less consistent and less significant for non-optimal $M$ values (App.~\ref{app:p-values}).

For experiments on \gengoofK{4}, we additionally constrained the space of empirical game models induced by TE-PSRO by coarsening away the stochastic events in the last one  two rounds of the full three-round underlying game. We indicate this by $IR$ that stands for \term{included rounds}; it may take values $[0]$, $[0,1]$ or $[0,1,2]$ indicating that each empirical game tree can include (a) stochastic event(s) only in its first round, first two rounds, and all three rounds respectively. 
In Fig.~\ref{fig:abstract_spe_mss_chap8}, we see that $IR = [0]$ and $IR = [0, 1]$ tended to yield the best performance regardless of MSS and that, for both of these settings, SPE outperformed NE as the MSS. App.~\ref{app:IR_gengoof} offers insights on how $IR$ and $M$ jointly impact TE-PSRO performance.

\begin{figure*}[htp!]
    \centering
    \begin{subfigure}[b]{0.36\textwidth}
    \caption{$M = 2$ \label{fig:spe_abs4_M2}}
    \includegraphics[width=\textwidth]{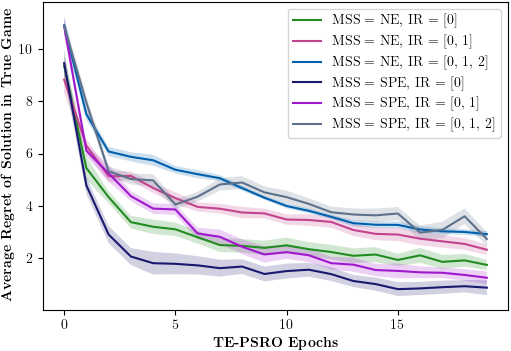}
    \end{subfigure}\hspace{50pt}
    \begin{subfigure}[b]{0.36\textwidth}
     \caption{$M = 4$ \label{fig:spe_abs4_M4}}
    \includegraphics[width=\textwidth]{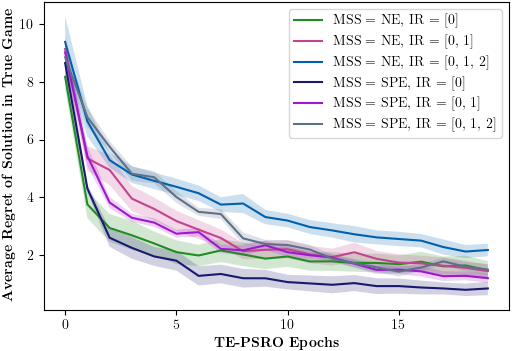}
    \end{subfigure}
    \caption{Average regret of $\bsigma^*$ evaluated in \gengoofK4\ over the course of TE-PSRO's runtime, using NE or SPE as the MSS.}
    \label{fig:abstract_spe_mss_chap8}
\end{figure*}


\section{Conclusions}\label{sec:conclusion}

We introduced multiple extensions of Tree-Exploiting PSRO, enabling its application to complex games of imperfect information. 
Our main innovation is the treatment of best responses computed by DRL as abstract policies, incorporated as actions in the empirical game tree.
To manage growth of the empirical game as BRs are generated over the course of TE-PSRO, we introduced a hyperparameter~$M$, which controls the number of infosets that can be expanded per epoch.
Finally, we demonstrated that having an extensive-form empirical game model can be leveraged in the form of new meta-strategy solvers based on Nash refinements. 
Toward that end, we developed a modular algorithm for identifying SPE solutions in imperfect-information games. We demonstrated these methods on two carefully constructed complex games, featuring multiple rounds of offer/counteroffer with signaling options.
We showed that TE-PSRO easily outperforms normal-form PSRO in this environment, and that intermediate values of $M$ perform best. 

Particularly intriguing is our finding that exploiting Nash refinement in an MSS offers promise for improving strategy exploration.
Even when the goal is not to find a subgame-perfect solution (i.e., EVAL is NE rather than SPE), targeting best response to SPE rather than NE can be beneficial.
Further work will be required to confirm the scope of and understand the reasons for this advantage.
One intuitive explanation is that empirical game equilibria containing non-credible threats may be particularly exploitable in the underlying game, and thus not the most promising lines to pursue in strategy exploration.


\begin{acks}
This work was supported in part by a grant from the Effective Altruism Foundation and by the US National Science Foundation under CRII Award 2153184.
\end{acks}



\clearpage
\bibliographystyle{ACM-Reference-Format} 
\bibliography{aamas2025}

\newpage
\appendix
\onecolumn

\section*{Appendices for AAMAS25 Submission 917}

\section{Illustration of Gameplay in \gengoofK{4}}\label{sec:gengoof4_diag}
\begin{figure}[htbp]
\centering
\begin{subfigure}[t]{0.95\textwidth}
\centering
\begin{tikzpicture}[thick,
    level 1/.style = {level distance = 15mm, sibling distance = 22mm},
    level 2/.style = {level distance = 15mm, sibling distance = 26mm},
    level 3/.style = {level distance = 15mm, sibling distance = 32mm},
    level 4/.style = {level distance = 18mm, sibling distance = 12mm},
    engine/.style = {inner sep = 1pt, above}]
    \node [draw, black, fill={gg_chance}, regular polygon, regular polygon sides=3, rotate=180, inner sep=0.10cm] {}
    [black, ->]
    child { node [draw, black, fill={gg_player1}, circle] (1) {} 
        child {node [draw, black, fill={gg_player2}, diamond] (2) {} 
            child {node [draw, black, fill={gg_chance}, regular polygon, regular polygon sides=3, rotate=180, inner sep=0.10cm] (3) {}
                child {node [draw, black, fill={gg_player1_4}, circle] (4) {}
                edge from parent node[engine, left] {$B$}}
                child {node [draw, black, fill={gg_player1_5}, circle] (5) {}
                edge from parent node[engine, left] {$C$}}
                child {node [draw, black, fill={gg_player1_6}, circle] (6) {}
                edge from parent node[engine, right] {$D$}}
            edge from parent node[engine, sloped] {$a^1_2$}}
            child {node [draw, black, fill={gg_chance}, regular polygon, regular polygon sides=3, rotate=180, inner sep=0.10cm] (7) {}
                child {node [draw, black, fill={gg_player1_7}, circle] (8) {}
                edge from parent node[engine, left] {$B$}}
                child {node [draw, black, fill={gg_player1_8}, circle] (9) {}
                edge from parent node[engine, left] {$C$}}
                child {node [draw, black, fill={gg_player1_9}, circle] (10) {}
                edge from parent node[engine, right] {$D$}}
        edge from parent node[engine, sloped] {$a^4_2$}}
        edge from parent node[engine, sloped] {$a^1_1$}}
        child {node [draw, black, fill={gg_player2}, diamond] (11) {} 
        edge from parent node[engine, sloped] {$a^4_1$}}
    edge from parent node[engine, sloped] {$A$}}
    child { node [draw, black, fill={gg_player1_1}, circle] (12) {} 
    edge from parent node[engine, left] {$B$}}
    child { node [draw, black, fill={gg_player1_2}, circle] (13) {} 
    edge from parent node[engine, right] {$C$}}
    child { node [draw, black, fill={gg_player1_3}, circle] (14) {} 
        child {node [draw, black, fill={gg_player2_1}, diamond] (17) {} 
        edge from parent node[engine, sloped] {$a^1_1$}}
        child {node [draw, black, fill={gg_player2_1}, diamond] (26) {} 
            child {node [draw, black, fill={gg_chance}, regular polygon, regular polygon sides=3, rotate=180, inner sep=0.10cm] (18) {}
        child {node [draw, black, fill={gg_player1_10}, circle] (19) {}
        edge from parent node[engine, left] {$A$}}
        child {node [draw, black, fill={gg_player1_11}, circle] (20) {}
        edge from parent node[engine, left] {$B$}}
        child {node [draw, black, fill={gg_player1_12}, circle] (21) {}
        edge from parent node[engine, right] {$C$}}
    edge from parent node[engine, sloped] {$a^1_2$}}
    child {node [draw, black, fill={gg_chance}, regular polygon, regular polygon sides=3, rotate=180, inner sep=0.10cm] (22) {}
        child {node [draw, black, fill={gg_player1_13}, circle] (23) {}
        edge from parent node[engine, left] {$A$}}
        child {node [draw, black, fill={gg_player1_14}, circle] (24) {}
        edge from parent node[engine, left] {$B$}}
        child {node [draw, black, fill={gg_player1_15}, circle] (25) {}
        edge from parent node[engine, right] {$C$}}
edge from parent node[engine, sloped] {$a^4_2$}}
        edge from parent node[engine, sloped] {$a^4_1$}}
    edge from parent node[engine, sloped] {$D$}
    };
    \node[below=1cm of 12] (15) {};
    \node[below=1cm of 13] (16) {};
    \node[below=1cm of 11] (27) {};
    \node[below=1cm of 17] (28) {};
    \node[below=1cm of 4] (29) {};
    \node[below=1cm of 5] (30) {};
    \node[below=1cm of 6] (31) {};
    \node[below=1cm of 8] (32) {};
    \node[below=1cm of 9] (33) {};
    \node[below=1cm of 10] (34) {};
    \node[below=1cm of 19] (35) {};
    \node[below=1cm of 20] (36) {};
    \node[below=1cm of 21] (37) {};
    \node[below=1cm of 23] (38) {};
    \node[below=1cm of 24] (39) {};
    \node[below=1cm of 25] (40) {};

    \path (12) -- node[auto=false]{{\LARGE \vdots}} (15);
    \path (13) -- node[auto=false]{{\LARGE \vdots}} (16);
    \path (2) -- node[auto=false]{{\LARGE \ldots}} (11);
    \path (3) -- node[auto=false]{{\LARGE \ldots}} (7);
    \path (11) -- node[auto=false]{{\LARGE \vdots}} (27);
    \path (18) -- node[auto=false]{{\LARGE \ldots}} (22);
    \path (17) -- node[auto=false]{{\LARGE \vdots}} (28);
    \path (4) -- node[auto=false]{{\LARGE \vdots}} (29);
    \path (5) -- node[auto=false]{{\LARGE \vdots}} (30);
    \path (6) -- node[auto=false]{{\LARGE \vdots}} (31);
    \path (8) -- node[auto=false]{{\LARGE \vdots}} (32);
    \path (9) -- node[auto=false]{{\LARGE \vdots}} (33);
    \path (10) -- node[auto=false]{{\LARGE \vdots}} (34);
    \path (19) -- node[auto=false]{{\LARGE \vdots}} (35);
    \path (20) -- node[auto=false]{{\LARGE \vdots}} (36);
    \path (21) -- node[auto=false]{{\LARGE \vdots}} (37);
    \path (23) -- node[auto=false]{{\LARGE \vdots}} (38);
    \path (24) -- node[auto=false]{{\LARGE \vdots}} (39);
    \path (25) -- node[auto=false]{{\LARGE \vdots}} (40);
    
\end{tikzpicture}
\end{subfigure}
\begin{subfigure}[t]{0.9\textwidth}
\centering
\fbox{\begin{tabular}{rr}
\small{\textcolor{gg_chance}{\TriangleDown}} & \footnotesize Chance Nodes \\
\small{\textcolor{gg_player1}{\CircleSolid} \textcolor{gg_player1_1}{\CircleSolid} \textcolor{gg_player1_2}{\CircleSolid} \textcolor{gg_player1_3}{\CircleSolid} \textcolor{gg_player1_4}{\CircleSolid} \textcolor{gg_player1_5}{\CircleSolid} \textcolor{gg_player1_6}{\CircleSolid} \textcolor{gg_player1_7}{\CircleSolid}} \\ 
\small{\textcolor{gg_player1_8}{\CircleSolid} \textcolor{gg_player1_9}{\CircleSolid} \textcolor{gg_player1_10}{\CircleSolid} \textcolor{gg_player1_11}{\CircleSolid} \textcolor{gg_player1_12}{\CircleSolid} \textcolor{gg_player1_13}{\CircleSolid} \textcolor{gg_player1_14}{\CircleSolid} \textcolor{gg_player1_15}{\CircleSolid}} & \footnotesize Player 1 Infosets \\
\small{\textcolor{gg_player2}{\DiamondSolid} \textcolor{gg_player2_1}{\DiamondSolid}} & \footnotesize Player 2 Infosets
\end{tabular}}
    \end{subfigure}
    \caption{Extensive-form representation of the first round of \gengoof{4}, an instance of \gengoof  where the number of stochastic outcomes at the start of the game is $K=4$, and the total number of game rounds is $3$.}
    \label{fig:gengoof_chap3}
\end{figure}
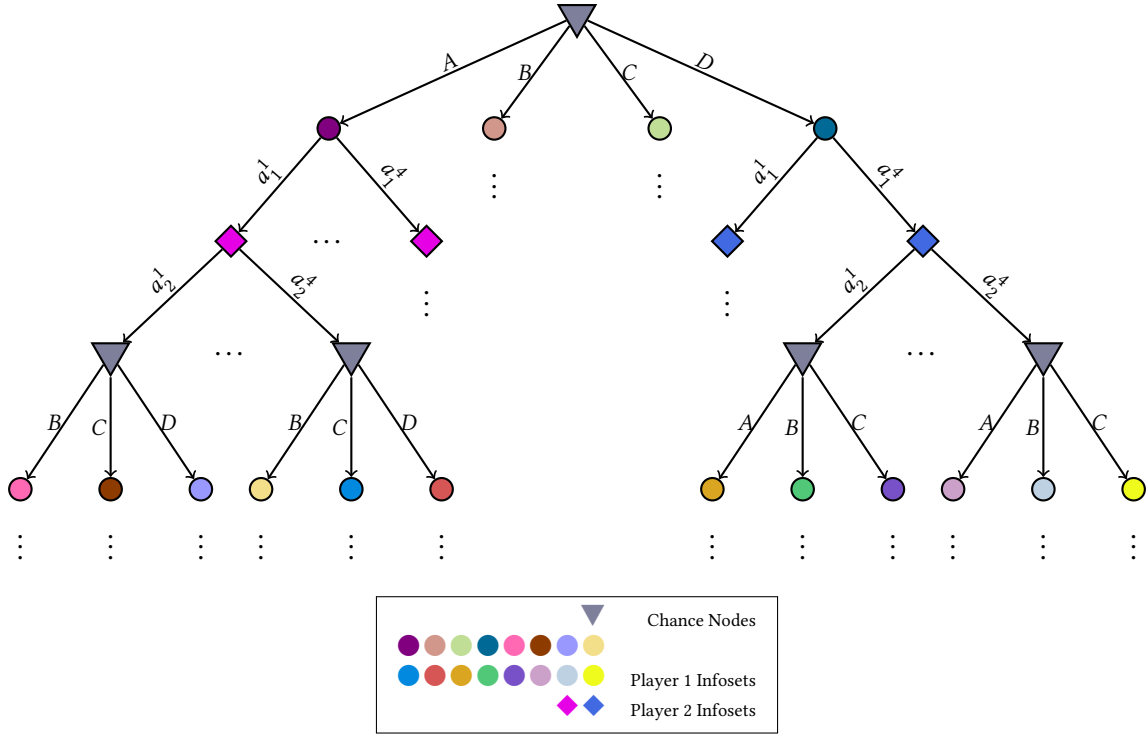

\section{Illustrations of Controlled Empirical Game Tree Expansion With New Best Responses}\label{app:te_psro_snapshots}
Recall the approach introduced in Section~\ref{sec:abstraction}, where each best response policy with learned DQN weights $\vartheta$ is represented with an abstract symbol, and new actions are added to only $M$ of each player's information sets in $\hat{G}$. We use the symbols $\pi_x$ for $x \in \{0, 1, 2, \ldots \}$ in a slight abuse of notation to distinguish between policy labels for different learned weights instead $\pi_{\vartheta_x}$. We now articulate an example of how the empirical game tree might now grow through the new paths generated for $M = 1$ using the sequential bargaining game introduced earlier as a running example. The initial strategy profile of the empirical game tree is set for the information sets that comprise both players' first turns in the negotiations. An example is depicted in Figure~\ref{fig:empirical_game_iter0} for an initial strategy profile where $\bsigma_1$ at both of player 1's information sets $(H,)$ and $(L,)$ returns the sole action $(\pi_0, \F)$ and $\bsigma_2$ at player 2's information sets $(H, (\pi_0, \F))$ and $(L, (\pi_0, \F))$ returns the sole action $(\pi_0, \T)$. The choice to reveal the outside offer signal is selected randomly for both players. The policy $\pi_0$ for player $j$ corresponds to an initial random weight setting $\vartheta_0^j$ of that player's DQN (i.e. $\pi_0$ for player 1 is not identical to $\pi_0$ for player 2). We abuse notation slightly in Figure~\ref{fig:empirical_game_iter0} by including the revelation $R$ that is part of the action outputted by $\vartheta_0^j$ for each player $j$ in the edges of $\hat{G}$ in addition to $\pi_0$ so that the effect of player 1's choice of $R$ on player 2's information sets is captured visually.

The best response policy for each player $\pi_1$ with weights $\vartheta_1$ is then added to the action space of $M$ information sets in $\hat{\cI_j}$ per player, where $M$ is defined as in Section~\ref{sec:abstraction} and is set to 1 for this example. As an additional check on $\hat{G}$'s growth, the number of negotiation turns explicitly included so far in the empirical game tree is increased by at most one during the expansion of the tree, depending on where the best responses are added. It is important to note that we require the revelation $R$ of the action outputted by the weights $\vartheta_x$ of policy $\pi_x$ to be identical for all infostates represented by a given $I$ in $\hat{G}$.

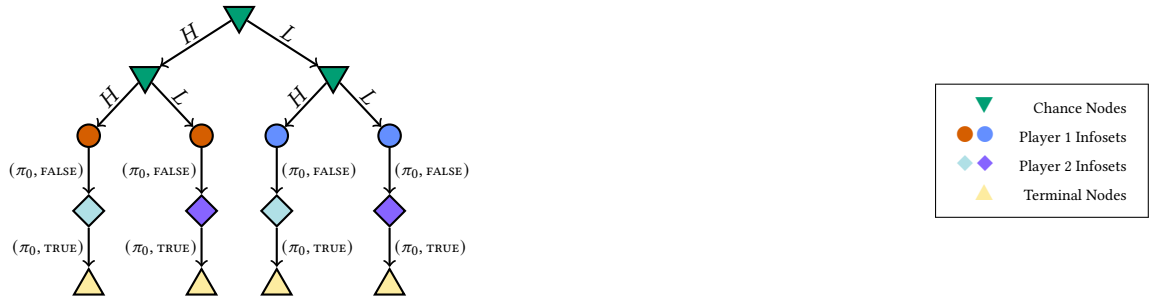
\begin{figure}[ht!]
\centering
\begin{subfigure}[c]{0.45\textwidth}
\begin{flushleft}
\begin{tikzpicture}[thick,
    level 1/.style = {level distance = 8mm, sibling distance = 25mm},
    level 2/.style = {level distance = 8mm, sibling distance = 15mm},
    level 3/.style = {level distance = 10mm, sibling distance = 15mm},
    level 4/.style = {level distance = 10mm, sibling distance = 8mm},
    engine/.style = {inner sep = 1pt, above}]
    \node [draw, black, fill={chance}, regular polygon, regular polygon sides=3, rotate=180, inner sep=0.08cm] {}
    [black, ->]

    child {node [draw, black, fill={chance}, regular polygon, regular polygon sides=3, rotate=180, inner sep=0.08cm] (1) {} 
      child {node [draw, black, fill={player1}, circle] (2) {} 
        child {node [draw, black, fill={player2_2}, diamond] (3) {}
          child {node [draw, black, fill={terminal}, regular polygon, regular polygon sides=3, inner sep=0.08cm] (4) {}
          edge from parent node[engine, left] {{\scriptsize $(\pi_0, \T)$}}}
        edge from parent node[engine, left] {{\scriptsize $(\pi_0, \F)$}}}
      edge from parent node[engine, sloped] {$H$}}
      child {node [draw, black, fill={player1}, circle] (5) {} 
        child {node [draw, black, fill={player2_1}, diamond] (3) {}
          child {node [draw, black, fill={terminal}, regular polygon, regular polygon sides=3, inner sep=0.08cm] (4) {}
          edge from parent node[engine, left] {{\scriptsize $(\pi_0, \T)$}}}
        edge from parent node[engine, left] {{\scriptsize $(\pi_0, \F)$}}}
      edge from parent node[engine, sloped] {$L$}}
    edge from parent node[engine, sloped] {$H$}}
    child {node [draw, black, fill={chance}, regular polygon, regular polygon sides=3, rotate=180, inner sep=0.08cm] (1) {} 
      child {node [draw, black, fill={player1_1}, circle] (2) {} 
        child {node [draw, black, fill={player2_2}, diamond] (3) {}
          child {node [draw, black, fill={terminal}, regular polygon, regular polygon sides=3, inner sep=0.08cm] (4) {}
          edge from parent node[engine, right] {{\scriptsize $(\pi_0, \T)$}}}
        edge from parent node[engine, right] {{\scriptsize $(\pi_0, \F)$}}}
      edge from parent node[engine, sloped] {$H$}}
      child {node [draw, black, fill={player1_1}, circle] (5) {} 
        child {node [draw, black, fill={player2_1}, diamond] (3) {}
          child {node [draw, black, fill={terminal}, regular polygon, regular polygon sides=3, inner sep=0.08cm] (4) {}
          edge from parent node[engine, right] {{\scriptsize $(\pi_0, \T)$}}}
        edge from parent node[engine, right] {{\scriptsize $(\pi_0, \F)$}}}
      edge from parent node[engine, sloped] {$L$}}
    edge from parent node[engine, sloped] {$L$}
    };

\end{tikzpicture}
\end{flushleft}
\end{subfigure}
\begin{subfigure}[c]{0.4\textwidth}
\begin{flushright}
    \fbox{\begin{tabular}{rr}
\scriptsize{\textcolor{chance}{\TriangleDown}} & \scriptsize Chance Nodes \\
\scriptsize{\textcolor{player1}{\CircleSolid} \textcolor{player1_1}{\CircleSolid}} & \scriptsize Player 1 Infosets \\
\scriptsize{\textcolor{player2_2}{\DiamondSolid} \textcolor{player2_1}{\DiamondSolid}} & \scriptsize Player 2 Infosets \\
\scriptsize{\textcolor{terminal}{\TriangleUp}} & \scriptsize Terminal Nodes
\end{tabular}}
\end{flushright}
\end{subfigure}

    \caption{Sample empirical game tree with initial policy set at start of TE-PSRO}
    \label{fig:empirical_game_iter0}
\end{figure}

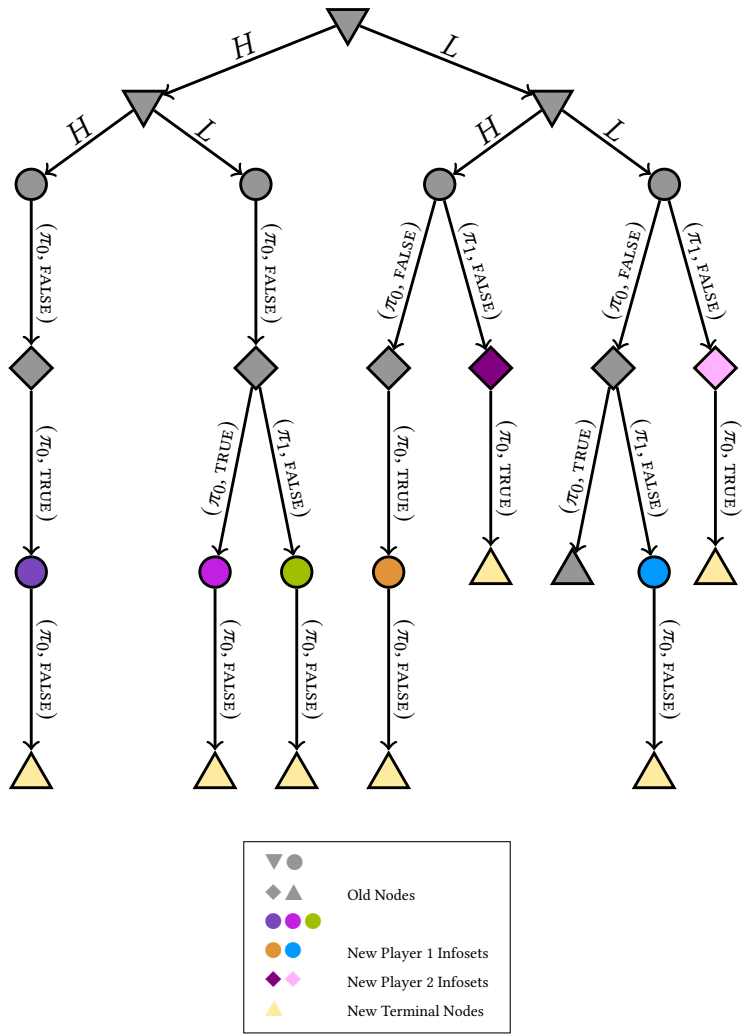
\begin{figure}[ht!]
\centering
\begin{subfigure}[t]{\textwidth}
\centering
\resizebox{0.55\textwidth}{!}{%
\begin{tikzpicture}[thick,
    level 1/.style = {level distance = 8mm, sibling distance = 40mm},
    level 2/.style = {level distance = 8mm, sibling distance = 22mm},
    level 3/.style = {level distance = 18mm, sibling distance = 10mm},
    level 4/.style = {level distance = 20mm, sibling distance = 8mm},
    engine/.style = {inner sep = 1pt, above}]
    \node [draw, black, fill={oldplayer}, regular polygon, regular polygon sides=3, rotate=180, inner sep=0.08cm] {}
    [black, ->]

    child {node [draw, black, fill={oldplayer}, regular polygon, regular polygon sides=3, rotate=180, inner sep=0.08cm] (1) {} 
      child {node [draw, black, fill={oldplayer}, circle] (2) {} 
        child {node [draw, black, fill={oldplayer}, diamond] (3) {}
          child {node [draw, black, fill={player1_5}, circle] (4) {}
            child {node [draw, black, fill={terminal}, regular polygon, regular polygon sides=3, inner sep=0.08cm] (6) {}
            edge from parent node[engine, sloped] {{\scriptsize $(\pi_0, \F)$}}}
          edge from parent node[engine, sloped] {{\scriptsize $(\pi_0, \T)$}}}
        edge from parent node[engine, sloped] {{\scriptsize $(\pi_0, \F)$}}}
      edge from parent node[engine, sloped] {$H$}}
      child {node [draw, black, fill={oldplayer}, circle] (5) {} 
        child {node [draw, black, fill={oldplayer}, diamond] (3) {}
          child {node [draw, black, fill={player1_6}, circle] (4) {}
            child {node [draw, black, fill={terminal}, regular polygon, regular polygon sides=3, inner sep=0.08cm] (6) {}
            edge from parent node[engine, sloped] {{\scriptsize $(\pi_0, \F)$}}}
          edge from parent node[engine, sloped] {{\scriptsize $(\pi_0, \T)$}}}
          child {node [draw, black, fill={player1_4}, circle] (4) {}
            child {node [draw, black, fill={terminal}, regular polygon, regular polygon sides=3, inner sep=0.08cm] (6) {}
            edge from parent node[engine, sloped] {{\scriptsize $(\pi_0, \F)$}}}
          edge from parent node[engine, sloped] {{\scriptsize $(\pi_1, \F)$}}}
        edge from parent node[engine, sloped] {{\scriptsize $(\pi_0, \F)$}}}
      edge from parent node[engine, sloped] {$L$}}
    edge from parent node[engine, sloped] {$H$}}
    child {node [draw, black, fill={oldplayer}, regular polygon, regular polygon sides=3, rotate=180, inner sep=0.08cm] (1) {} 
      child {node [draw, black, fill={oldplayer}, circle] (2) {} 
        child {node [draw, black, fill={oldplayer}, diamond] (3) {}
          child {node [draw, black, fill={player1_3}, circle] (4) {}
            child {node [draw, black, fill={terminal}, regular polygon, regular polygon sides=3, inner sep=0.08cm] (6) {}
            edge from parent node[engine, sloped] {{\scriptsize $(\pi_0, \F)$}}}
          edge from parent node[engine, sloped] {{\scriptsize $(\pi_0, \T)$}}}
        edge from parent node[engine, sloped] {{\scriptsize $(\pi_0, \F)$}}}
        child {node [draw, black, fill={player2_3}, diamond] (7) {}
          child {node [draw, black, fill={terminal}, regular polygon, regular polygon sides=3, inner sep=0.08cm] (8) {}
          edge from parent node[engine, sloped] {{\scriptsize $(\pi_0, \T)$}}}
        edge from parent node[engine, sloped] {{\scriptsize $(\pi_1, \F)$}}}
      edge from parent node[engine, sloped] {$H$}}
      child {node [draw, black, fill={oldplayer}, circle] (5) {} 
        child {node [draw, black, fill={oldplayer}, diamond] (3) {}
          child {node [draw, black, fill={oldplayer}, regular polygon, regular polygon sides=3, inner sep=0.08cm] (4) {}
          edge from parent node[engine, sloped] {{\scriptsize $(\pi_0, \T)$}}}
          child {node [draw, black, fill={player1_2}, circle] (4) {}
            child {node [draw, black, fill={terminal}, regular polygon, regular polygon sides=3, inner sep=0.08cm] (6) {}
            edge from parent node[engine, sloped] {{\scriptsize $(\pi_0, \F)$}}}
          edge from parent node[engine, sloped] {{\scriptsize $(\pi_1, \F)$}}}
        edge from parent node[engine, sloped] {{\scriptsize $(\pi_0, \F)$}}}
        child {node [draw, black, fill={player2_4}, diamond] (8) {}
          child {node [draw, black, fill={terminal}, regular polygon, regular polygon sides=3, inner sep=0.08cm] (9) {}
          edge from parent node[engine, sloped] {{\scriptsize $(\pi_0, \T)$}}}
        edge from parent node[engine, sloped] {{\scriptsize $(\pi_1, \F)$}}}
      edge from parent node[engine, sloped] {$L$}}
    edge from parent node[engine, sloped] {$L$}
    };

\end{tikzpicture}%
}
\end{subfigure}
\bigskip

\begin{subfigure}[t]{.35\textwidth}
\centering
    \fbox{\begin{tabular}{ll}
\scriptsize{\textcolor{oldplayer}{\TriangleDown} \textcolor{oldplayer}{\CircleSolid}} \\ \scriptsize{\textcolor{oldplayer}{\DiamondSolid} \textcolor{oldplayer}{\TriangleUp}} & \scriptsize Old Nodes \\
\scriptsize{\textcolor{player1_5}{\CircleSolid} \textcolor{player1_6}{\CircleSolid} \textcolor{player1_4}{\CircleSolid}} \\ \scriptsize{\textcolor{player1_3}{\CircleSolid} \textcolor{player1_2}{\CircleSolid}} & \scriptsize New Player 1 Infosets \\
\scriptsize{\textcolor{player2_3}{\DiamondSolid} \textcolor{player2_4}{\DiamondSolid}} & \scriptsize New Player 2 Infosets \\
\scriptsize{\textcolor{terminal}{\TriangleUp}} & \scriptsize New Terminal Nodes
\end{tabular}}
\end{subfigure}
    \caption{Empirical game tree after iteration 1 of TE-PSRO for $M = 1$.}
    \label{fig:empirical_game_iter1}
\end{figure}

Let player 1's new best response be $(\pi_1, \F)$ and player 2's new best response be $(\pi_1, \F)$. Since $M = 1$, suppose player 1's randomly chosen information set is $(L,)$ and player 2's chosen information set is $(L, (\pi_0, \F))$. All possible strategy profiles for the empirical game tree given the new $\hat{\cI_1} \cup \hat{\cI_2}$ are then passed into the simulator one at a time and sampled with a fixed budget. The simulator then returns a corresponding history of actions \textit{through the empirical game tree} representing a new path and sampled reward vector. All possible new paths are then added to the empirical game tree, with each (possibly new) terminal node containing a weighted average of all payoff samples associated with the leaf's history known as the \term{leaf utility}. Figure~\ref{fig:empirical_game_iter1} illustrates the resulting new empirical game tree, where new information sets are brightly colored and all old nodes included in the initial empirical game tree are gray. Additional figures that visualize the addition of new best responses to $\hat{G}$ for $M = 1$ during iterations 2 and 3 of TE-PSRO are included in the appendices.

In iteration 2 of TE-PSRO, suppose that player 1's best response $(\pi_2, \F)$ is added to the randomly chosen information set $(H, L,(\pi_0, \F),(\pi_0, \T))$
highlighted in bright pink in Figure~\ref{fig:empirical_game_iter1} and that player 2's best response $(\pi_2, \T)$ is added to the randomly chosen information set $(H, (\pi_0, \F))$ highlighted in both gray and dark pink in Figure~\ref{fig:empirical_game_iter1}. The resulting expanded game tree with all possible new paths is depicted in Figure~\ref{fig:empirical_game_iter2}. Notice that in this figure, the terminal nodes from the previous iteration's empirical game tree are replaced with new information sets yet again, mostly belonging to player 2. Finally, in iteration 3 of TE-PSRO, suppose that player 1's best response policy $(\pi_3, \T)$ gets added to the information set $(H,)$ early on in the tree and player 2's best response policy $(\pi_3, \F)$ gets added to the information set $(H, (\pi_1, \F))$. The resulting expanded empirical game tree is depicted in Figure~\ref{fig:empirical_game_iter3}.

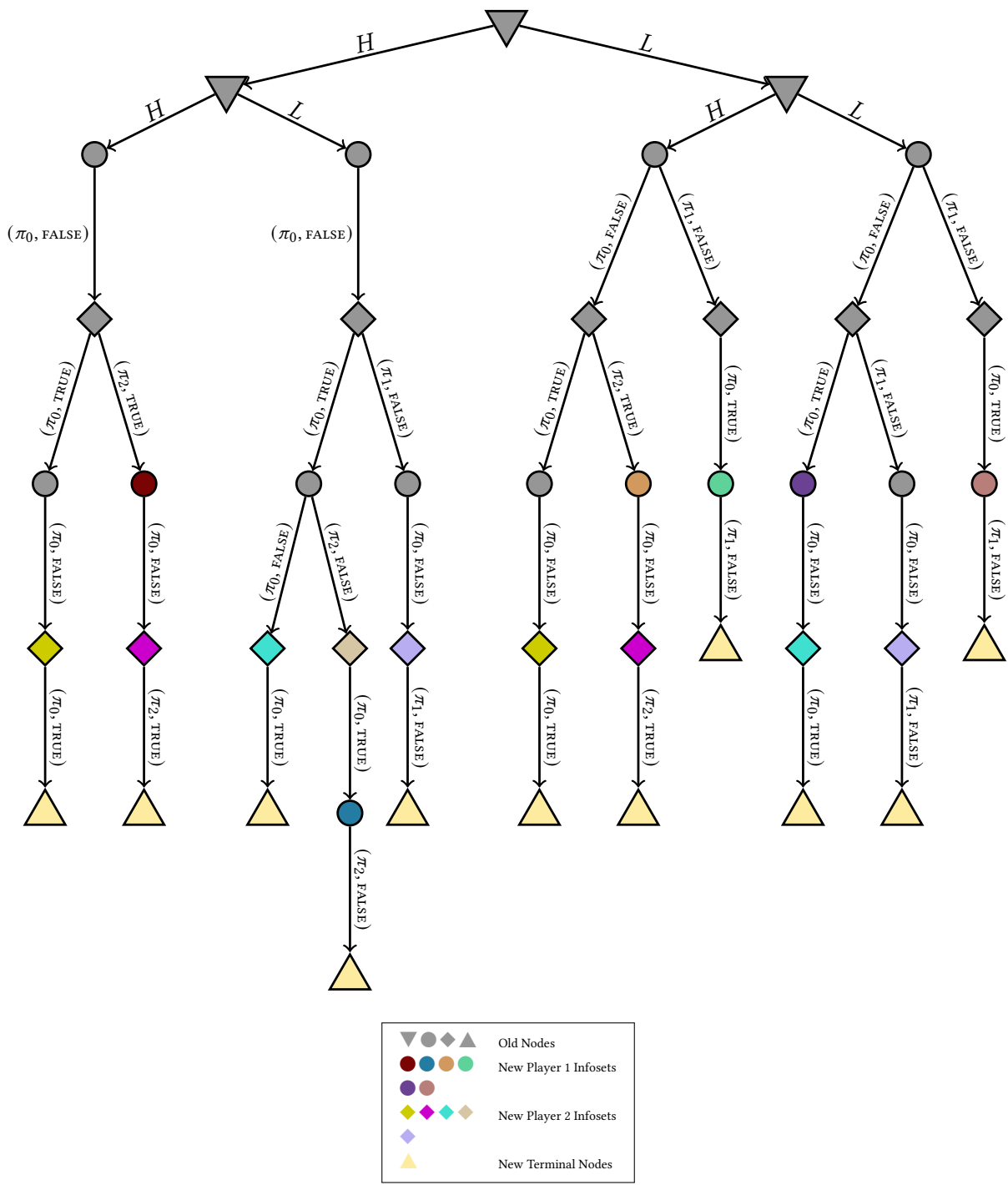
\begin{figure}[H]
\centering
\begin{subfigure}[b]{\textwidth}
\centering
\resizebox{0.9\textwidth}{!}{%
\begin{tikzpicture}[thick,
    level 1/.style = {level distance = 8mm, sibling distance = 68mm},
    level 2/.style = {level distance = 8mm, sibling distance = 32mm},
    level 3/.style = {level distance = 20mm, sibling distance = 16mm},
    level 4/.style = {level distance = 20mm, sibling distance = 12mm},
    level 5/.style = {level distance = 20mm, sibling distance = 10mm},
    engine/.style = {inner sep = 1pt, above}]
    \node [draw, black, fill={oldplayer}, regular polygon, regular polygon sides=3, rotate=180, inner sep=0.1cm] {}
    [black, ->]

    child {node [draw, black, fill={oldplayer}, regular polygon, regular polygon sides=3, rotate=180, inner sep=0.1cm] (1) {} 
      child {node [draw, black, fill={oldplayer}, circle] (2) {} 
        child {node [draw, black, fill={oldplayer}, diamond] (3) {}
          child {node [draw, black, fill={oldplayer}, circle] (4) {}
            child {node [draw, black, fill={player2_8}, diamond] (6) {}
              child {node [draw, black, fill={terminal}, regular polygon, regular polygon sides=3, inner sep=0.1cm] (20) {}
              edge from parent node[engine, sloped] {{\scriptsize $(\pi_0, \T)$}}}
            edge from parent node[engine, sloped] {{\scriptsize $(\pi_0, \F)$}}}
          edge from parent node[engine, sloped] {{\scriptsize $(\pi_0, \T)$}}}
          child {node [draw, black, fill={player1_7}, circle] (10) {}
            child {node [draw, black, fill={player2_7}, diamond] (11) {}
              child {node [draw, black, fill={terminal}, regular polygon, regular polygon sides=3, inner sep=0.1cm] (19) {}
              edge from parent node[engine, sloped] {{\scriptsize $(\pi_2, \T)$}}}
            edge from parent node[engine, sloped] {{\scriptsize $(\pi_0, \F)$}}}
          edge from parent node[engine, sloped] {{\scriptsize $(\pi_2, \T)$}}}
        edge from parent node[engine, left] {{\scriptsize $(\pi_0, \F)$}}}
      edge from parent node[engine, sloped] {$H$}}
      child {node [draw, black, fill={oldplayer}, circle] (5) {} 
        child {node [draw, black, fill={oldplayer}, diamond] (3) {}
          child {node [draw, black, fill={oldplayer}, circle] (4) {}
            child {node [draw, black, fill={player2_10}, diamond] (6) {}
              child {node [draw, black, fill={terminal}, regular polygon, regular polygon sides=3, inner sep=0.1cm] (21) {}
              edge from parent node[engine, sloped] {{\scriptsize $(\pi_0, \T)$}}}
            edge from parent node[engine, sloped] {{\scriptsize $(\pi_0, \F)$}}}
            child {node [draw, black, fill={player2_6}, diamond] (8) {}
              child {node [draw, black, fill={player1_8}, circle] (12) {}
                child {node [draw, black, fill={terminal}, regular polygon, regular polygon sides=3, inner sep=0.1cm] (13) {}
                edge from parent node[engine, sloped] {{\scriptsize $(\pi_2, \F)$}}}
              edge from parent node[engine, sloped] {{\scriptsize $(\pi_0, \T)$}}}
            edge from parent node[engine, sloped] {{\scriptsize $(\pi_2, \F)$}}}
          edge from parent node[engine, sloped] {{\scriptsize $(\pi_0, \T)$}}}
          child {node [draw, black, fill={oldplayer}, circle] (4) {}
            child {node [draw, black, fill={player2_9}, diamond] (6) {}
              child {node [draw, black, fill={terminal}, regular polygon, regular polygon sides=3, inner sep=0.1cm] (21) {}
              edge from parent node[engine, sloped] {{\scriptsize $(\pi_1, \F)$}}}
            edge from parent node[engine, sloped] {{\scriptsize $(\pi_0, \F)$}}}
          edge from parent node[engine, sloped] {{\scriptsize $(\pi_1, \F)$}}}
        edge from parent node[engine, left] {{\scriptsize $(\pi_0, \F)$}}}
      edge from parent node[engine, sloped] {$L$}}
    edge from parent node[engine, sloped] {$H$}}
    child {node [draw, black, fill={oldplayer}, regular polygon, regular polygon sides=3, rotate=180, inner sep=0.1cm] (1) {} 
      child {node [draw, black, fill={oldplayer}, circle] (2) {} 
        child {node [draw, black, fill={oldplayer}, diamond] (3) {}
          child {node [draw, black, fill={oldplayer}, circle] (4) {}
            child {node [draw, black, fill={player2_8}, diamond] (6) {}
              child {node [draw, black, fill={terminal}, regular polygon, regular polygon sides=3, inner sep=0.1cm] (20) {}
              edge from parent node[engine, sloped] {{\scriptsize $(\pi_0, \T)$}}}
            edge from parent node[engine, sloped] {{\scriptsize $(\pi_0, \F)$}}}
          edge from parent node[engine, sloped] {{\scriptsize $(\pi_0, \T)$}}}
          child {node [draw, black, fill={player1_9}, circle] (12) {}
            child {node [draw, black, fill={player2_7}, diamond] (13) {}
              child {node [draw, black, fill={terminal}, regular polygon, regular polygon sides=3, inner sep=0.1cm] (19) {}
              edge from parent node[engine, sloped] {{\scriptsize $(\pi_2, \T)$}}}
            edge from parent node[engine, sloped] {{\scriptsize $(\pi_0, \F)$}}}
          edge from parent node[engine, sloped] {{\scriptsize $(\pi_2, \T)$}}}
        edge from parent node[engine, sloped] {{\scriptsize $(\pi_0, \F)$}}}
        child {node [draw, black, fill={oldplayer}, diamond] (7) {}
          child {node [draw, black, fill={player1_10}, circle] (8) {}
            child {node [draw, black, fill={terminal}, regular polygon, regular polygon sides=3, inner sep=0.1cm] (14) {}
            edge from parent node[engine, sloped] {{\scriptsize $(\pi_1, \F)$}}}
          edge from parent node[engine, sloped] {{\scriptsize $(\pi_0, \T)$}}}
        edge from parent node[engine, sloped] {{\scriptsize $(\pi_1, \F)$}}}
      edge from parent node[engine, sloped] {$H$}}
      child {node [draw, black, fill={oldplayer}, circle] (5) {} 
        child {node [draw, black, fill={oldplayer}, diamond] (3) {}
          child {node [draw, black, fill={player1_11}, circle] (4) {}
            child {node [draw, black, fill={player2_10}, diamond] (6) {}
              child {node [draw, black, fill={terminal}, regular polygon, regular polygon sides=3, inner sep=0.1cm] (21) {}
              edge from parent node[engine, sloped] {{\scriptsize $(\pi_0, \T)$}}}
            edge from parent node[engine, sloped] {{\scriptsize $(\pi_0, \F)$}}}
          edge from parent node[engine, sloped] {{\scriptsize $(\pi_0, \T)$}}}
          child {node [draw, black, fill={oldplayer}, circle] (4) {}
            child {node [draw, black, fill={player2_9}, diamond] (6) {}
              child {node [draw, black, fill={terminal}, regular polygon, regular polygon sides=3, inner sep=0.1cm] (21) {}
              edge from parent node[engine, sloped] {{\scriptsize $(\pi_1, \F)$}}}
            edge from parent node[engine, sloped] {{\scriptsize $(\pi_0, \F)$}}}
          edge from parent node[engine, sloped] {{\scriptsize $(\pi_1, \F)$}}}
        edge from parent node[engine, sloped] {{\scriptsize $(\pi_0, \F)$}}}
        child {node [draw, black, fill={oldplayer}, diamond] (8) {}
          child {node [draw, black, fill={player1_12}, circle] (9) {}
            child {node [draw, black, fill={terminal}, regular polygon, regular polygon sides=3, inner sep=0.1cm] (16) {}
            edge from parent node[engine, sloped] {{\scriptsize $(\pi_1, \F)$}}}
          edge from parent node[engine, sloped] {{\scriptsize $(\pi_0, \T)$}}}
        edge from parent node[engine, sloped] {{\scriptsize $(\pi_1, \F)$}}}
      edge from parent node[engine, sloped] {$L$}}
    edge from parent node[engine, sloped] {$L$}
    };

\end{tikzpicture}%
}
\vspace{1.5em}
\end{subfigure}
\begin{subfigure}[b]{0.33\textwidth}
\centering
    \fbox{\begin{tabular}{ll}
\footnotesize{\textcolor{oldplayer}{\TriangleDown} \textcolor{oldplayer}{\CircleSolid} \textcolor{oldplayer}{\DiamondSolid} \textcolor{oldplayer}{\TriangleUp}} & \scriptsize Old Nodes \\
\footnotesize{\textcolor{player1_7}{\CircleSolid} \textcolor{player1_8}{\CircleSolid} \textcolor{player1_9}{\CircleSolid} \textcolor{player1_10}{\CircleSolid}} & \scriptsize New Player 1 Infosets \\
\footnotesize{\textcolor{player1_11}{\CircleSolid} \textcolor{player1_12}{\CircleSolid}} & \scriptsize \\
\footnotesize{\textcolor{player2_8}{\DiamondSolid} \textcolor{player2_7}{\DiamondSolid} \textcolor{player2_10}{\DiamondSolid} \textcolor{player2_6}{\DiamondSolid}} & \scriptsize New Player 2 Infosets \\
\footnotesize{\textcolor{player2_9}{\DiamondSolid}} & \scriptsize \\
\footnotesize{\textcolor{terminal}{\TriangleUp}} & \scriptsize New Terminal Nodes
\end{tabular}}
\end{subfigure}
    \caption{Empirical game tree after iteration 2 of TE-PSRO where a new BR policy has been added to a single information state in $\hat{G}$ for each player; here, the player 1 best response $(\pi_2, \F)$ has been added to the player 1 infoset $(H, L, (\pi_0, \F), (\pi_0, \T))$ while the player 2 best response $(\pi_2, \T)$ has been added to the player 2 infoset $(H, (\pi_0, \F))$.}
    \label{fig:empirical_game_iter2}
\end{figure}

\newpage

Iteration 3

\begin{figure}[H]
\centering
\begin{subfigure}[b]{\textwidth}
\centering
\resizebox{\textwidth}{!}{%
\begin{tikzpicture}[thick,
    level 1/.style = {level distance = 8mm, sibling distance = 80mm},
    level 2/.style = {level distance = 8mm, sibling distance = 40mm},
    level 3/.style = {level distance = 18mm, sibling distance = 22mm},
    level 4/.style = {level distance = 18mm, sibling distance = 12mm},
    level 5/.style = {level distance = 18mm, sibling distance = 10mm},
    level 6/.style = {level distance = 18mm, sibling distance = 10mm},
    engine/.style = {inner sep = 1pt, above}]
    \node [draw, black, fill={oldplayer}, regular polygon, regular polygon sides=3, rotate=180, inner sep=0.1cm] {}
    [black, ->]

    child {node [draw, black, fill={oldplayer}, regular polygon, regular polygon sides=3, rotate=180, inner sep=0.1cm] (1) {} 
      child {node [draw, black, fill={oldplayer}, circle] (2) {} 
        child {node [draw, black, fill={oldplayer}, diamond] (3) {}
          child {node [draw, black, fill={oldplayer}, circle] (4) {}
            child {node [draw, black, fill={oldplayer}, diamond] (6) {}
              child {node [draw, black, fill={oldplayer}, regular polygon, regular polygon sides=3, inner sep=0.1cm] (20) {}
              edge from parent node[engine, sloped] {{\scriptsize $(\pi_0, \T)$}}}
              child {node [draw, black, fill={player1_17}, circle] (26) {}
                child {node [draw, black, fill={terminal}, regular polygon, regular polygon sides=3, inner sep=0.1cm] (27) {}
                edge from parent node[engine, sloped] {{\scriptsize $(\pi_0, \F)$}}}
              edge from parent node[engine, sloped] {{\scriptsize $(\pi_2, \T)$}}}
            edge from parent node[engine, sloped] {{\scriptsize $(\pi_0, \F)$}}}
          edge from parent node[engine, sloped] {{\scriptsize $(\pi_0, \T)$}}}
          child {node [draw, black, fill={oldplayer}, circle] (10) {}
            child {node [draw, black, fill={oldplayer}, diamond] (11) {}
              child {node [draw, black, fill={player1_15}, circle] (23) {}
                child {node [draw, black, fill={terminal}, regular polygon, regular polygon sides=3, inner sep=0.1cm] (19) {}
                edge from parent node[engine, sloped] {{\scriptsize $(\pi_0, \F)$}}}
              edge from parent node[engine, sloped] {{\scriptsize $(\pi_2, \T)$}}}
            edge from parent node[engine, sloped] {{\scriptsize $(\pi_0, \F)$}}}
          edge from parent node[engine, sloped] {{\scriptsize $(\pi_2, \T)$}}}
        edge from parent node[engine, sloped] {{\scriptsize $(\pi_0, \F)$}}}
        child {node [draw, black, fill={player2_4}, diamond] (8) {}
          child {node [draw, black, fill={terminal}, regular polygon, regular polygon sides=3, inner sep=0.1cm] (31) {}
          edge from parent node[engine, sloped] {{\scriptsize $(\pi_0, \T)$}}}
        edge from parent node[engine, sloped] {{\scriptsize $(\pi_3, \T)$}}}
      edge from parent node[engine, sloped] {$H$}}
      child {node [draw, black, fill={oldplayer}, circle] (5) {} 
        child {node [draw, black, fill={oldplayer}, diamond] (3) {}
          child {node [draw, black, fill={oldplayer}, circle] (4) {}
            child {node [draw, black, fill={oldplayer}, diamond] (6) {}
              child {node [draw, black, fill={player1_2}, circle] (24) {}
                child {node [draw, black, fill={terminal}, regular polygon, regular polygon sides=3, inner sep=0.1cm] (21) {}
                edge from parent node[engine, sloped] {{\scriptsize $(\pi_0, \F)$}}}
              edge from parent node[engine, sloped] {{\scriptsize $(\pi_0, \T)$}}}
            edge from parent node[engine, sloped] {{\scriptsize $(\pi_0, \F)$}}}
            child {node [draw, black, fill={oldplayer}, diamond] (8) {}
              child {node [draw, black, fill={oldplayer}, circle] (12) {}
                child {node [draw, black, fill={oldplayer}, regular polygon, regular polygon sides=3, inner sep=0.1cm] (13) {}
                edge from parent node[engine, sloped] {{\scriptsize $(\pi_2, \F)$}}}
              edge from parent node[engine, sloped] {{\scriptsize $(\pi_0, \T)$}}}
            edge from parent node[engine, sloped] {{\scriptsize $(\pi_2, \F)$}}}
          edge from parent node[engine, sloped] {{\scriptsize $(\pi_0, \T)$}}}
          child {node [draw, black, fill={oldplayer}, circle] (4) {}
            child {node [draw, black, fill={oldplayer}, diamond] (6) {}
              child {node [draw, black, fill={player1_14}, circle] (22) {}
                child {node [draw, black, fill={terminal}, regular polygon, regular polygon sides=3, inner sep=0.1cm] (21) {}
                edge from parent node[engine, sloped] {{\scriptsize $(\pi_0, \F)$}}}
              edge from parent node[engine, sloped] {{\scriptsize $(\pi_1, \F)$}}}
            edge from parent node[engine, sloped] {{\scriptsize $(\pi_0, \F)$}}}
          edge from parent node[engine, sloped] {{\scriptsize $(\pi_1, \F)$}}}
        edge from parent node[engine, sloped] {{\scriptsize $(\pi_0, \F)$}}}
        child {node [draw, black, fill={player2_1}, diamond] (10) {}
          child {node [draw, black, fill={terminal}, regular polygon, regular polygon sides=3, inner sep=0.1cm] (32) {}
          edge from parent node[engine, sloped] {{\scriptsize $(\pi_0, \T)$}}}
        edge from parent node[engine, sloped] {{\scriptsize $(\pi_3, \T)$}}}
      edge from parent node[engine, sloped] {$L$}}
    edge from parent node[engine, sloped] {$H$}}
    child {node [draw, black, fill={oldplayer}, regular polygon, regular polygon sides=3, rotate=180, inner sep=0.1cm] (1) {} 
      child {node [draw, black, fill={oldplayer}, circle] (2) {} 
        child {node [draw, black, fill={oldplayer}, diamond] (3) {}
          child {node [draw, black, fill={oldplayer}, circle] (4) {}
            child {node [draw, black, fill={oldplayer}, diamond] (6) {}
              child {node [draw, black, fill={oldplayer}, regular polygon, regular polygon sides=3, inner sep=0.1cm] (20) {}
              edge from parent node[engine, sloped] {{\scriptsize $(\pi_0, \T)$}}}
              child {node [draw, black, fill={player1_18}, circle] (27) {}
                child {node [draw, black, fill={terminal}, regular polygon, regular polygon sides=3, inner sep=0.1cm] (28) {}
                edge from parent node[engine, sloped] {{\scriptsize $(\pi_0, \F)$}}}
              edge from parent node[engine, sloped] {{\scriptsize $(\pi_2, \T)$}}}
            edge from parent node[engine, sloped] {{\scriptsize $(\pi_0, \F)$}}}
          edge from parent node[engine, sloped] {{\scriptsize $(\pi_0, \T)$}}}
          child {node [draw, black, fill={oldplayer}, circle] (12) {}
            child {node [draw, black, fill={oldplayer}, diamond] (13) {}
              child {node [draw, black, fill={oldplayer}, regular polygon, regular polygon sides=3, inner sep=0.1cm] (19) {}
              edge from parent node[engine, sloped] {{\scriptsize $(\pi_2, \T)$}}}
            edge from parent node[engine, sloped] {{\scriptsize $(\pi_0, \F)$}}}
          edge from parent node[engine, sloped] {{\scriptsize $(\pi_2, \T)$}}}
        edge from parent node[engine, sloped] {{\scriptsize $(\pi_0, \F)$}}}
        child {node [draw, black, fill={oldplayer}, diamond] (7) {}
          child {node [draw, black, fill={oldplayer}, circle] (8) {}
            child {node [draw, black, fill={player2_12}, diamond] (33) {}
              child {node [draw, black, fill={terminal}, regular polygon, regular polygon sides=3, inner sep=0.1cm] (14) {}
              edge from parent node[engine, sloped] {{\scriptsize $(\pi_0, \T)$}}}
            edge from parent node[engine, sloped] {{\scriptsize $(\pi_1, \F)$}}}
          edge from parent node[engine, sloped] {{\scriptsize $(\pi_0, \T)$}}}
          child {node [draw, black, fill={player1_19}, circle] (29) {}
            child {node [draw, black, fill={terminal}, regular polygon, regular polygon sides=3, inner sep=0.1cm] (30) {}
            edge from parent node[engine, sloped] {{\scriptsize $(\pi_1, \F)$}}}
          edge from parent node[engine, sloped] {{\scriptsize $(\pi_3, \F)$}}}
        edge from parent node[engine, sloped] {{\scriptsize $(\pi_1, \F)$}}}
      edge from parent node[engine, sloped] {$H$}}
      child {node [draw, black, fill={oldplayer}, circle] (5) {} 
        child {node [draw, black, fill={oldplayer}, diamond] (3) {}
          child {node [draw, black, fill={oldplayer}, circle] (4) {}
            child {node [draw, black, fill={oldplayer}, diamond] (6) {}
              child {node [draw, black, fill={player1_20}, circle] (30) {}
                child {node [draw, black, fill={terminal}, regular polygon, regular polygon sides=3, inner sep=0.1cm] (21) {}
                edge from parent node[engine, sloped] {{\scriptsize $(\pi_0, \F)$}}}
              edge from parent node[engine, sloped] {{\scriptsize $(\pi_0, \T)$}}}
            edge from parent node[engine, sloped] {{\scriptsize $(\pi_0, \F)$}}}
          edge from parent node[engine, sloped] {{\scriptsize $(\pi_0, \T)$}}}
          child {node [draw, black, fill={oldplayer}, circle] (4) {}
            child {node [draw, black, fill={oldplayer}, diamond] (6) {}
              child {node [draw, black, fill={player1_1}, circle] (27) {}
                child {node [draw, black, fill={terminal}, regular polygon, regular polygon sides=3, inner sep=0.1cm] (21) {}
                edge from parent node[engine, sloped] {{\scriptsize $(\pi_0, \F)$}}}
              edge from parent node[engine, sloped] {{\scriptsize $(\pi_1, \F)$}}}
            edge from parent node[engine, sloped] {{\scriptsize $(\pi_0, \F)$}}}
          edge from parent node[engine, sloped] {{\scriptsize $(\pi_1, \F)$}}}
        edge from parent node[engine, sloped] {{\scriptsize $(\pi_0, \F)$}}}
        child {node [draw, black, fill={oldplayer}, diamond] (8) {}
          child {node [draw, black, fill={oldplayer}, circle] (9) {}
            child {node [draw, black, fill={player2_11}, diamond] (31) {}
              child {node [draw, black, fill={terminal}, regular polygon, regular polygon sides=3, inner sep=0.1cm] (16) {}
              edge from parent node[engine, sloped] {{\scriptsize $(\pi_0, \T)$}}}
            edge from parent node[engine, sloped] {{\scriptsize $(\pi_1, \F)$}}}
          edge from parent node[engine, sloped] {{\scriptsize $(\pi_0, \T)$}}}
        edge from parent node[engine, sloped] {{\scriptsize $(\pi_1, \F)$}}}
      edge from parent node[engine, sloped] {$L$}}
    edge from parent node[engine, sloped] {$L$}
    };

\end{tikzpicture}%
}
\vspace{1.5em}
\end{subfigure}
\begin{subfigure}[b]{0.4\textwidth}
\centering
    \fbox{\begin{tabular}{ll}
\footnotesize{\textcolor{oldplayer}{\TriangleDown} \textcolor{oldplayer}{\CircleSolid} \textcolor{oldplayer}{\DiamondSolid} \textcolor{oldplayer}{\TriangleUp}} & \scriptsize Old Nodes \\
\footnotesize{\textcolor{player1_17}{\CircleSolid} \textcolor{player1_15}{\CircleSolid} \textcolor{player1_2}{\CircleSolid} \textcolor{player1_14}{\CircleSolid}} & \scriptsize New Player 1 Infosets \\
\footnotesize{\textcolor{player1_18}{\CircleSolid} \textcolor{player1_19}{\CircleSolid} \textcolor{player1_20}{\CircleSolid} \textcolor{player1_1}{\CircleSolid}} &  \\
\footnotesize{\textcolor{player2_4}{\DiamondSolid} \textcolor{player2_1}{\DiamondSolid} \textcolor{player2_12}{\DiamondSolid} \textcolor{player2_11}{\DiamondSolid}} & \scriptsize New Player 2 Infosets \\
\footnotesize{\textcolor{terminal}{\TriangleUp}} & \scriptsize New Terminal Nodes
\end{tabular}}
\end{subfigure}
    \caption{Empirical game tree after iteration 3 of TE-PSRO where a new BR policy has been added to a single information state in $\hat{G}$ for each player; here, the player 1 best response $(\pi_3, \T)$ has been added to the player 1 infoset $(H,)$ while the player 2 best response $(\pi_3, \F)$ has been added to the player 2 infoset $(H, (\pi_1, \F))$.}
    \label{fig:empirical_game_iter3}
\end{figure}
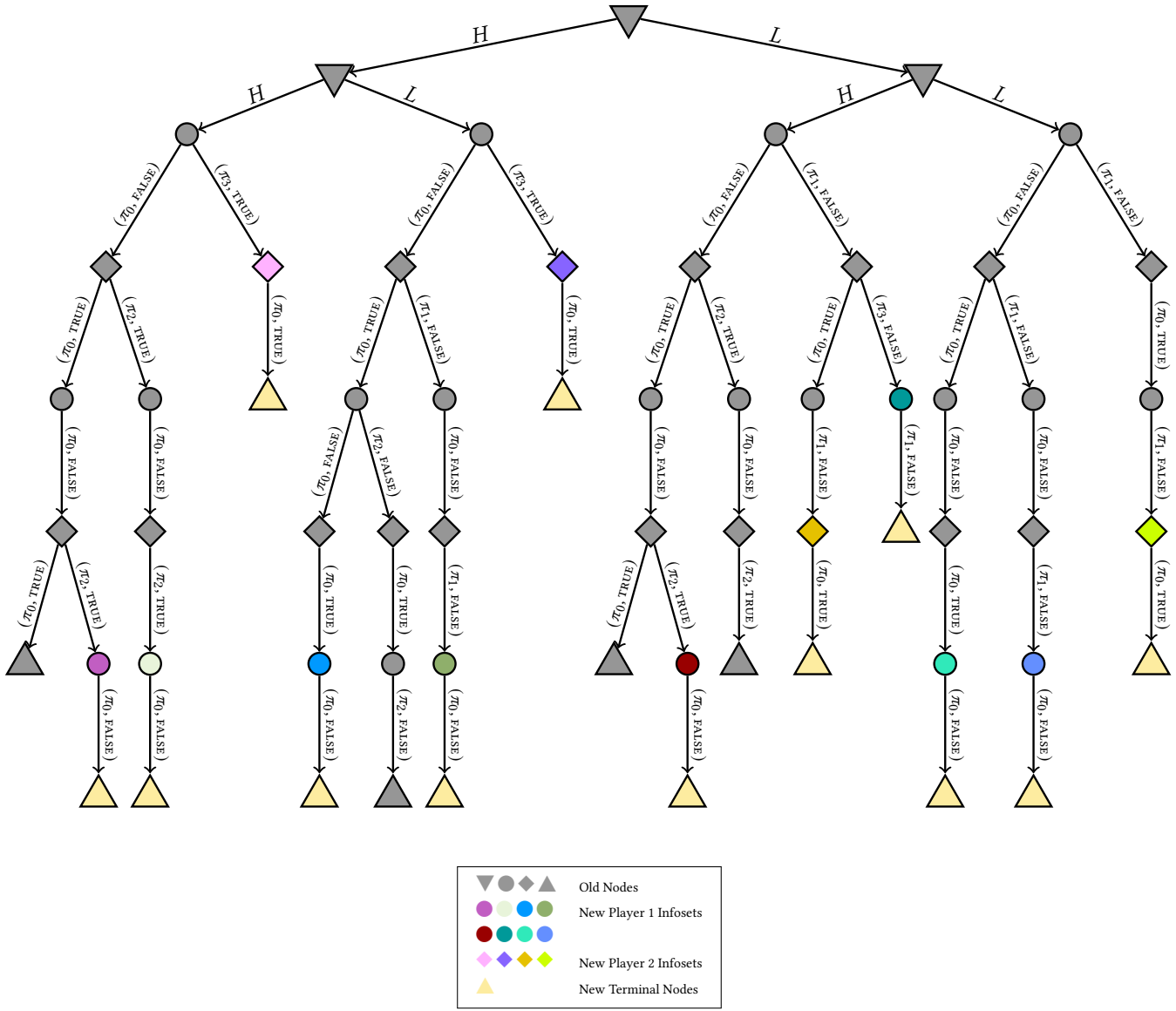

\newpage
\subsection{Game Tree Expansion}\label{sec:expand_game_tree}

It is important to note that several of what were originally terminal nodes in $\hat{G}$ (regardless of how many rounds of negotiation have passed) have been replaced with new information sets belonging to player 1, along with a ``default policy" edge leading to a brand new terminal node (represented with yellow triangles). The default policy edge is labeled on the assumption that the player has continued with the same policy he committed to in his previous information set in the sequence, since a policy $\pi_x$ specifies what that player should do in every possible game state of the true game. When adding new simulation data to the empirical game tree, we are adding a new history of policy-and-signal-reveal pairs to the tree along with its corresponding sampled reward vector. Although simple in principle, the fact that gameplay can end after anywhere from 2 to $2T$ player turns have been completed at different points in the game tree complicates the process of expanding the tree with new simulation data in a manner that \citet{konicki22} did not address since their considerations were restricted to games whose terminal nodes all had the same history length. We demonstrate how a new observation sequence and sampled reward outputted by the simulator might be added to $\hat{G}$ for the bargaining game via three scenarios.

In Scenario 1, the tree in Figure~\ref{fig:scenario1_a} is expanded to include a completely new terminal node whose history does not overlap with any other present terminal nodes. The lack of overlap between the two terminal nodes' respective histories after the first decision node make depicted in Figure~\ref{fig:scenario1_b} meant the second terminal node could be added without overwriting the first terminal node's history, position in the tree, or leaf utility. 
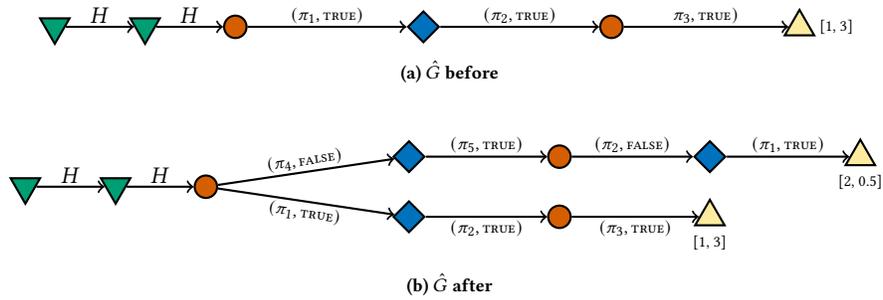
\begin{figure}[ht!]
\centering
\begin{subfigure}[b]{\textwidth}
\centering
\begin{tikzpicture}[thick,
    level 1/.style = {level distance = 12mm, sibling distance = 80mm},
    level 2/.style = {level distance = 12mm, sibling distance = 38mm},
    level 3/.style = {level distance = 25mm, sibling distance = 13mm},
    level 4/.style = {level distance = 25mm, sibling distance = 8mm},
    level 5/.style = {level distance = 25mm, sibling distance = 8mm},
    engine/.style = {inner sep = 1pt, above},
    grow=right]
    \node [draw, black, fill={chance}, regular polygon, regular polygon sides=3, rotate=180, inner sep=0.08cm] {}
    [black, ->]

    child {node [draw, black, fill={chance}, regular polygon, regular polygon sides=3, rotate=180, inner sep=0.08cm] (1) {} 
      child {node [draw, black, fill={player1}, circle] (2) {} 
        child {node [draw, black, fill={player2}, diamond] (3) {}
          child {node [draw, black, fill={player1}, circle] (4) {}
            child {node [draw, black, fill={terminal},  regular polygon, regular polygon sides=3, inner sep=0.08cm, label=right:{{\scriptsize[1, 3]}}] (5) {}
            edge from parent node[engine, sloped] {{\scriptsize $\mathbf{\pi}_3, \T)$}}}
          edge from parent node[engine, sloped] {{\scriptsize $(\mathbf{\pi}_2, \T)$}}}
        edge from parent node[engine, sloped] {{\scriptsize $(\mathbf{\pi}_1, \T)$}}}
      edge from parent node[engine, sloped] {$H$}}
    edge from parent node[engine, sloped] {$H$}
    };

\end{tikzpicture}
\caption{$\hat{G}$ before }\label{fig:scenario1_a}
\vspace{2em}
\end{subfigure} \\
\begin{subfigure}[b]{\textwidth}
\centering

\begin{tikzpicture}[thick,
    level 1/.style = {level distance = 12mm, sibling distance = 80mm},
    level 2/.style = {level distance = 12mm, sibling distance = 38mm},
    level 3/.style = {level distance = 27mm, sibling distance = 8mm},
    level 4/.style = {level distance = 20mm, sibling distance = 8mm},
    level 5/.style = {level distance = 20mm, sibling distance = 8mm},
    engine/.style = {inner sep = 1pt, above},
    grow=right]
    \node [draw, black, fill={chance}, regular polygon, regular polygon sides=3, rotate=180, inner sep=0.08cm] {}
    [black, ->]

    child {node [draw, black, fill={chance}, regular polygon, regular polygon sides=3, rotate=180, inner sep=0.08cm] (1) {} 
      child {node [draw, black, fill={player1}, circle] (2) {} 
        child {node [draw, black, fill={player2}, diamond] (3) {}
          child {node [draw, black, fill={player1}, circle] (4) {}
            child {node [draw, black, fill={terminal}, regular polygon, regular polygon sides=3, inner sep=0.08cm, label=below:{{\scriptsize[1, 3]}}] (5) {}
            edge from parent node[engine, sloped,below] {{\scriptsize $(\mathbf{\pi}_3, \T)$}}}
          edge from parent node[engine, sloped,below] {{\scriptsize $(\mathbf{\pi}_2, \T)$}}}
        edge from parent node[engine, sloped,below] {{\scriptsize $(\mathbf{\pi}_1, \T)$}}}
        child {node [draw, black, fill={player2}, diamond] (6) {}
          child {node [draw, black, fill={player1}, circle] (7) {}
            child {node [draw, black, fill={player2}, diamond] (8) {}
              child {node [draw, black, fill={terminal}, regular polygon, regular polygon sides=3, inner sep=0.08cm, label=below:{{\scriptsize[2, 0.5]}}] (9) {}
              edge from parent node[engine, sloped] {{\scriptsize $(\mathbf{\pi}_1, \T)$}}}
            edge from parent node[engine, sloped] {{\scriptsize $(\mathbf{\pi}_2, \F)$}}}
          edge from parent node[engine, sloped] {{\scriptsize $(\mathbf{\pi}_5, \T)$}}}
        edge from parent node[engine, sloped] {{\scriptsize $(\mathbf{\pi}_4, \F)$}}}
      edge from parent node[engine, sloped] {$H$}}
    edge from parent node[engine, sloped] {$H$}
    };

\end{tikzpicture}
\caption{ $\hat{G}$ after } \label{fig:scenario1_b}
\end{subfigure}
    \caption{Scenario 1
    }
    \label{fig:expand_scenario1}
\end{figure}

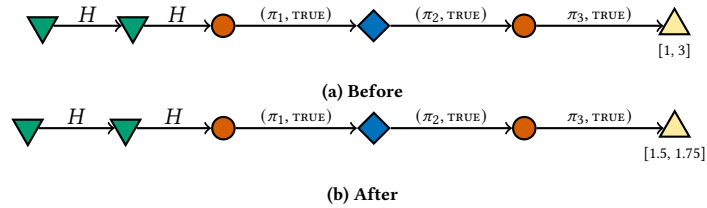
\begin{figure}[ht!]
\centering
    \begin{subfigure}[b]{\textwidth}
    \centering
\begin{tikzpicture}[thick,
    level 1/.style = {level distance = 12mm, sibling distance = 80mm},
    level 2/.style = {level distance = 12mm, sibling distance = 38mm},
    level 3/.style = {level distance = 20mm, sibling distance = 15mm},
    level 4/.style = {level distance = 20mm, sibling distance = 8mm},
    level 5/.style = {level distance = 20mm, sibling distance = 8mm},
    engine/.style = {inner sep = 1pt, above},
    grow=right]
    \node [draw, black, fill={chance}, regular polygon, regular polygon sides=3, rotate=180, inner sep=0.08cm] {}
    [black, ->]

    child {node [draw, black, fill={chance}, regular polygon, regular polygon sides=3, rotate=180, inner sep=0.08cm] (1) {} 
      child {node [draw, black, fill={player1}, circle] (2) {} 
        child {node [draw, black, fill={player2}, diamond] (3) {}
          child {node [draw, black, fill={player1}, circle] (4) {}
            child {node [draw, black, fill={terminal}, regular polygon, regular polygon sides=3, inner sep=0.08cm, label=below:{{\scriptsize[1, 3]}}] (5) {}
            edge from parent node[engine, sloped] {{\scriptsize $\mathbf{\pi}_3, \T)$}}}
          edge from parent node[engine, sloped] {{\scriptsize $(\mathbf{\pi}_2, \T)$}}}
        edge from parent node[engine, sloped] {{\scriptsize $(\mathbf{\pi}_1, \T)$}}}
      edge from parent node[engine, sloped] {$H$}}
    edge from parent node[engine, sloped] {$H$}
    };

\end{tikzpicture}
\caption{ Before} \label{fig:scenario2_a}
\end{subfigure} \\
\begin{subfigure}[b]{\textwidth}
\centering

\begin{tikzpicture}[thick,
    level 1/.style = {level distance = 13mm, sibling distance = 80mm},
    level 2/.style = {level distance = 13mm, sibling distance = 38mm},
    level 3/.style = {level distance = 20mm, sibling distance = 18mm},
    level 4/.style = {level distance = 20mm, sibling distance = 8mm},
    engine/.style = {inner sep = 1pt, above},
    grow=right]
    \node [draw, black, fill={chance}, regular polygon, regular polygon sides=3, rotate=180, inner sep=0.08cm] {}
    [black, ->]

    child {node [draw, black, fill={chance}, regular polygon, regular polygon sides=3, rotate=180, inner sep=0.08cm] (1) {} 
      child {node [draw, black, fill={player1}, circle] (2) {} 
        child {node [draw, black, fill={player2}, diamond] (3) {}
          child {node [draw, black, fill={player1}, circle] (4) {}
            child {node [draw, black, fill={terminal}, regular polygon, regular polygon sides=3, inner sep=0.08cm, label=below:{{\scriptsize[1.5, 1.75]}}] (5) {}
            edge from parent node[engine, sloped] {{\scriptsize $\mathbf{\pi}_3, \T)$}}}
          edge from parent node[engine, sloped] {{\scriptsize $(\mathbf{\pi}_2, \T)$}}}
        edge from parent node[engine, sloped] {{\scriptsize $(\mathbf{\pi}_1, \T)$}}}
      edge from parent node[engine, sloped] {$H$}}
    edge from parent node[engine, sloped] {$H$}
    };

\end{tikzpicture}
\caption{ After} \label{fig:scenario2_b}
\end{subfigure}
    \caption{Scenario 2}
    \label{fig:expand_scenario2}
\end{figure}

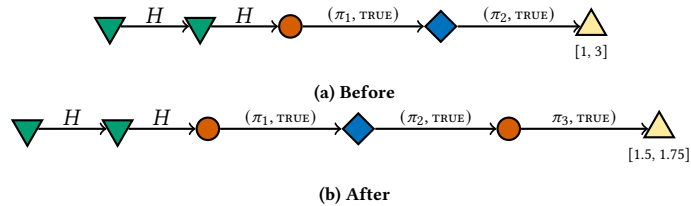
\begin{figure}[ht!]
\centering
    \begin{subfigure}[b]{\textwidth}
    \centering
\begin{tikzpicture}[thick,
    level 1/.style = {level distance = 12mm, sibling distance = 80mm},
    level 2/.style = {level distance = 12mm, sibling distance = 38mm},
    level 3/.style = {level distance = 20mm, sibling distance = 18mm},
    level 4/.style = {level distance = 20mm, sibling distance = 8mm},
    engine/.style = {inner sep = 1pt, above},
    grow=right]
    \node [draw, black, fill={chance}, regular polygon, regular polygon sides=3, rotate=180, inner sep=0.08cm] {}
    [black, ->]

    child {node [draw, black, fill={chance}, regular polygon, regular polygon sides=3, rotate=180, inner sep=0.08cm] (1) {} 
      child {node [draw, black, fill={player1}, circle] (2) {} 
        child {node [draw, black, fill={player2}, diamond] (3) {}
            child {node [draw, black, fill={terminal}, regular polygon, regular polygon sides=3, inner sep=0.08cm, label=below:{{\scriptsize[1, 3]}}] (5) {}
          edge from parent node[engine, sloped] {{\scriptsize $(\mathbf{\pi}_2, \T)$}}}
        edge from parent node[engine, sloped] {{\scriptsize $(\mathbf{\pi}_1, \T)$}}}
      edge from parent node[engine, sloped] {$H$}}
    edge from parent node[engine, sloped] {$H$}
    };

\end{tikzpicture}
\caption{ Before} \label{fig:scenario3_a}
\end{subfigure} \\
\begin{subfigure}[b]{\textwidth}
\centering

\begin{tikzpicture}[thick,
    level 1/.style = {level distance = 12mm, sibling distance = 80mm},
    level 2/.style = {level distance = 12mm, sibling distance = 38mm},
    level 3/.style = {level distance = 20mm, sibling distance = 18mm},
    level 4/.style = {level distance = 20mm, sibling distance = 8mm},
    level 5/.style = {level distance = 20mm, sibling distance = 8mm},
    engine/.style = {inner sep = 1pt, above},
    grow=right]
    \node [draw, black, fill={chance}, regular polygon, regular polygon sides=3, rotate=180, inner sep=0.08cm] {}
    [black, ->]

    child {node [draw, black, fill={chance}, regular polygon, regular polygon sides=3, rotate=180, inner sep=0.08cm] (1) {} 
      child {node [draw, black, fill={player1}, circle] (2) {} 
        child {node [draw, black, fill={player2}, diamond] (3) {}
          child {node [draw, black, fill={player1}, circle] (4) {}
            child {node [draw, black, fill={terminal}, regular polygon, regular polygon sides=3, inner sep=0.08cm, label=below:{{\scriptsize[1.5, 1.75]}}] (5) {}
            edge from parent node[engine, sloped] {{\scriptsize $\mathbf{\pi}_3, \T)$}}}
          edge from parent node[engine, sloped] {{\scriptsize $(\mathbf{\pi}_2, \T)$}}}
        edge from parent node[engine, sloped] {{\scriptsize $(\mathbf{\pi}_1, \T)$}}}
      edge from parent node[engine, sloped] {$H$}}
    edge from parent node[engine, sloped] {$H$}
    };

\end{tikzpicture}
\caption{After } \label{fig:scenario3_b}
\end{subfigure}
    \caption{Scenario 3}
    \label{fig:expand_scenario3}
\end{figure}

In the second scenario, the new history to be added to the tree in Figure~\ref{fig:scenario2_a} happens to be completely contained by an existing terminal node. Figure~\ref{fig:scenario2_b} reflects the resulting update where the set of terminal nodes in the tree remains unchanged, but the leaf utility at the existing node has been recalculated to be the average of the old sampled utility $[1, 3]$ and the new one $[2, 0.5]$. If the original happened to be an average of multiple samples with the same corresponding terminal history, the new utility would be just another sample. For instance, if the original terminal node's leaf utility $[1, 3]$ in Figure~\ref{fig:scenario2_a} was the result of two observations with sampled utilities $u^1 = [0, 2]$ and $u^2 = [2, 4]$, the node's utility in Figure~\ref{fig:scenario2_b} would be updated to the new average $[1.\overline{3}, 2.1\overline{6}]$.

In the third scenario depicted in Figure~\ref{fig:expand_scenario3}, the new history to be added to the tree in Figure~\ref{fig:scenario3_a} is identical to the one currently included in the tree, but with the addition of a second action from player 1. We see that in Figure~\ref{fig:scenario3_b}, the terminal node representing the conclusion of gameplay following player 2's action is now a decision node belonging to player 1, and the resulting leaf utility at the terminal node following player 1's action is a recalculated average of the original node's leaf utility $[1, 3]$ and the new terminal node's sampled reward $[2, 0.5]$, as described in the previous case.

\newpage
\section{GBI Pseudocode and Omitted Subroutines of \textsc{ComputeSPE}}\label{app:SPE}


\subsection{Subroutines for Organizing Subgames}

\begin{algorithm}[H]
\renewcommand{\thealgorithm}{}
\small
\floatname{algorithm}{GetSubgameRoots}
\caption{Algorithm that constructs a subtree $\Psi$ containing all the roots of the subgames within a game $G$}
\label{alg:get_roots}
\begin{algorithmic}
\Require{Input game $G$}

\State{Initialize $\Psi$ to contain only the game root $h_0$}

\For{$h \in D$, beginning with those closest to the root node}
\If{$|I(h)| == 1$ and if for every node $h'$ that succeeds $h$, every node $h'' \in I(h')$ succeeds $h$ (see $\varphi(h')$, $\varphi(h'')$)}
\State{Add $h$ to $\Psi$ via an outgoing edge from the closest node $h'$ in $\Psi$ that precedes $h$ (i.e. $(h', h)$)}
\EndIf
\EndFor

\vspace{1em}
\Return $\Psi$
\end{algorithmic}
\end{algorithm}

\begin{algorithm}[H]
\renewcommand{\thealgorithm}{}
\small
\floatname{algorithm}{GetSubgameGroups}
\caption{Algorithm that groups the subgames of $G$ by their height in $\Psi$}
\label{alg:get_pruning_seq}
\begin{algorithmic}
\Require{Input game $G$, subtree of subgame roots $\Psi$, maximum height $\ell$}

\State{Initialize each set $\Theta_k$ of subgames whose roots are in $\Psi$ at a height $k$ to $\emptyset$, for all $1 \leq k \leq \ell$}
\For{$h_{\theta} \in \Psi$}
\State{Add the identifier $\theta$ of subgame $G_{\theta}$ with root $h_{\theta}$ at height $k$ in the subtree $\Psi$ to $\Theta_k$}
\EndFor

\vspace{1em}
\Return $\{ \Theta_k \}_{k = 1}^{\ell}$
\end{algorithmic}
\end{algorithm}

\newpage
\subsection{GBI}

\begin{algorithm}[H]
\renewcommand{\thealgorithm}{}
\small
\floatname{algorithm}{GBI}
\caption{Generalized backward induction \citep{kaminski19}}
\label{alg:brute_compute_spe}
\begin{algorithmic}
\Require{Input game $G$}

\State{$\Psi \gets \textsc{GetSubgameRoots}(G)$}
\State{$\ell \gets$ height of $h_0$ in $\Psi$}
\State{$\{ \Theta_k \}_{k = 1}^{\ell} \gets \textsc{GetSubgameGroups}(G, \Psi, \ell)$}
\State{$\Sigma^{SPE}_1 \gets \{ \textsc{GetInitialSPE}(G, \Theta_1) \}$}
\For{$1 \leq k < \ell$}
\If{$\Sigma^{SPE}_k = \emptyset$}
\Return $\emptyset$
\EndIf
\State{$S^{k + 1} \gets \bigcup\limits_{\theta \in \Theta_{k + 1}, j \in N} \left( \bigtimes_{I \in \left.\cI_j\right|_{G_{\theta}}} \left.A_j\right|_{G_{\theta}}(I) \right)$}
\For{$\bsigma^k \in \Sigma^{SPE}_k$}
\For{$\bsigma^{k + 1} \in S^{k + 1}$}
\For{$\theta \in \Theta_{k + 1}$}
\If{ $\neg \textsc{IsNash}(\left.\bsigma^{k + 1}\right|_{G_{\theta}}, G_{\theta})$}
\State{Discard $\bsigma^k \cup \bsigma^{k + 1}$}
\EndIf
\State{Replace $G_{\theta}$ with terminal node and payoff $U\left( \left.\bsigma^{k + 1}\right|_{G_{\theta}} \right)$}
\EndFor
\State{Add $\bsigma^k \cup \bsigma^{k + 1}$ to $\Sigma^{SPE}_{k + 1}$}
\EndFor
\EndFor
\EndFor

\Return $\Sigma^{SPE}_{\ell}$
\end{algorithmic}
\end{algorithm}

\begin{algorithm}[H]
\renewcommand{\thealgorithm}{}
\small
\floatname{algorithm}{IsNash}
\caption{Algorithm for checking whether a partial solution $\left.\bsigma\right|_{G_{\theta}}$ is a NE of a subgame $G_{\theta}$ where $\theta \in \Theta_k$}
\label{alg:is_nash}
\begin{algorithmic}
\Require{Partial solution profile $\left.\bsigma\right|_{G_{\theta}}$, subgame $G_{\theta}$, $\theta \in \Theta_k$}
\vspace{1em}
\For{$j \in N$}
\State{$\left.S_j\right|_{G_{\theta}} = \times_{I \in \left.\cI_j\right|_{G_{\theta}}} \left.A_j\right|_{G_{\theta}}(I)$}
\For{$\bm{s}_j \in \left.S_j\right|_{G_{\theta}}$}
\If{$\hat{U}_j \left( \bm{s}_j, \left.\bsigma_{-j}\right|_{G_{\theta}} \right) > \hat{U}_j \left( \left.\bsigma\right|_{G_{\theta}} \right)$}
\Return False
\EndIf

\EndFor
\EndFor

\vspace{1em}
\Return True
\end{algorithmic}
\end{algorithm}

\newpage
\subsection{Subroutines for \textsc{ComputeSPE}}

\begin{algorithm}[H]
\renewcommand{\thealgorithm}{}
\small
\floatname{algorithm}{GetInitialSPE}
\caption{Algorithm that finds the initial set of partial SPE for the subgames closest to the terminal nodes in $G$}
\label{alg:get_spe_theta1}
\begin{algorithmic}
\Require{Input game $G$, subgame identifier set $\Theta_1$}

\State{Initialize $\bsigma^{SPE}_1 = \emptyset$}
\For{$\theta \in \Theta_1$}
\State{Find a NE for the subgame $\left.\bsigma^{NE}\right|_{G_{\theta}} = \textsc{SubgameCFR}(G_{\theta})$}
\State{$\bsigma^{SPE}_1 = \bsigma^{SPE}_1 \cup \left.\bsigma^{NE}\right|_{G_{\theta}}$}
\EndFor

\vspace{1em}
\Return{$\bsigma^{SPE}_1$}
\end{algorithmic}
\end{algorithm}

\begin{algorithm}[H]
\renewcommand{\thealgorithm}{}
\small
\floatname{algorithm}{\textsc{SubgameCFR}}
\caption{Adaptation of CFR for computing subgame SPE}
\label{alg:subgame_cfr}
\begin{algorithmic}
\Require{subgame $G_{\theta}$, partial SPE profile $\left.\bsigma^\mathit{SPE}\right|_{G_{\theta}}$, timesteps $T$}

\State{$\cI_{G_{\theta} \setminus \bsigma} \gets \{ I \in G_{\theta} \mid I \notin \left.\bsigma^\mathit{SPE}\right|_{G_{\theta}} \}$}
\If{$V(h_{\theta}) = 0$ and $\cI_{G_{\theta} \setminus \bsigma} = \emptyset$}
\Return $\left.\bsigma^\mathit{SPE}\right|_{G_{\theta}}$
\EndIf
\For{$I \in \cI_{G_{\theta} \setminus \bsigma}$}
\State $j \gets V(I)$
\State{$\bsigma^1(I)(a) \gets \frac{1}{\mid A_{j}(I) \mid}$ for all $a \in A_{j}(I)$}
\State{$R_I[a] \gets 0$ for for all $a \in A_{j}(I)$}
\State{$S_I[a] \gets 0$ for all $a \in A_{j}(I)$}
\EndFor

\For{$t \in \{1, \dotsc, T\}$}
\State{$u^{*} \gets \textsc{Traverse}\left( h_{G_{\theta}}, t, 1, 1, 1, G_{\theta}, \left.\bsigma^\mathit{SPE}\right|_{G_{\theta}} \right)$} 

\EndFor

\For{$I \in \cI_{G_{\theta} \setminus \bsigma}$}
\State{$\bsigma^{CFR}(I) \gets \textsc{Average}\left( \{ \bsigma^t(I) \}_{t = 1}^T \right)$}
\EndFor

\Return $\bsigma^{CFR} \cup \left.\bsigma^{SPE}\right|_{G_{\theta}}$
\end{algorithmic}
\end{algorithm}

\begin{algorithm}[H]
\renewcommand{\thealgorithm}{}
\small
\floatname{algorithm}{\textsc{Traverse}}
\caption{Completes a full CFR-traversal of subgame $G_{\theta}$}
\label{alg:traverse}
\begin{algorithmic}
\Require{Node $h$, time $t$, reach probabilities $r_0$, $r_1, r_2$; subgame $G_{\theta}$, partial SPE $\left.\bsigma^{SPE}\right|_{G_{\theta}}$}

\State{$j \gets V(h)$}
\If{$j = 0$}
\State{$u^{*} = \vec{0}$}
\For{ $x \in X(h)$}
\State{$u^{*} \gets u^{*} + P(x \mid h) \cdot \textsc{Traverse}\left( hx, \hspace{0.05em} t, \hspace{0.05em} r_1, \hspace{0.05em} r_2, \hspace{0.05em} r_0 \cdot P(x \mid h), \hspace{0.05em} G_{\theta}, \hspace{0.05em} \left.\bsigma^{SPE}\right|_{G_{\theta}} \right)$}
\EndFor
\Return $u^{*}$
\ElsIf{$h$ is terminal}
\Return $u(h)$
\ElsIf{$I(h) \in \left.\bsigma^{SPE}\right|_{G_{\theta}}$}
\Return $U(\left.\bsigma^{SPE}\right|_{G_{\theta}} \mid h)$
\EndIf

\State{$I \gets I(h)$}
\State{Initialize counterfactual utility $u^{*}_I \gets 0$}
\State{Initialize action utilities $u_I[a] \gets 0$ for $a \in A_{j}(I)$}
\For{$a \in A_{j}(I)$}
\If{$j = 1$}
\State{$u_I[a] \gets \textsc{Traverse}\left( ha, t, r_1 \cdot \bsigma^t(I)(a), r_2, r_0, G_{\theta}, \left.\bsigma^{SPE}\right|_{G_{\theta}} \right)$}
\Else
\State{$u_I[a] \gets \textsc{Traverse}\left( ha, t, r_1, r_2 \cdot \bsigma^t(I)(a), r_0, G_{\theta}, \left.\bsigma^{SPE}\right|_{G_{\theta}} \right)$}
\EndIf
\State{$u^{*}_I \gets u^{*}_I \mathrel{+} \bsigma^t(I)(a) \cdot u_I[a]$}
\EndFor

\State{$r_j, r_{-j} \gets \emptyset$}
\If{$j = 1$}
\State{$r_j \gets r_1$}
\State{$r_{-j} \gets r_2 r_0$}
\Else
\State{$r_j \gets r_2$}
\State{$r_{-j} \gets r_1 r_0$}
\EndIf
\For{$a \in A_{j}(I)$}
\State{$R_I(a) \gets R_I(a) \mathrel{+} r_{-j} \cdot \left(u_I[a][j] - u^{*}_I[j] \right)$}
\State{$S_I(a) \gets S_I(a) \mathrel{+} r_j \cdot \bsigma^t(I)(a)$}
\EndFor
\State{Update $\bsigma^{t + 1}(I)$ using $R_I(\cdot)$ values and regret-matching}

\Return $u^{*}_I$
\end{algorithmic}
\end{algorithm}

\newpage
\section{Omitted Proofs of Correctness and Runtime for \textsc{ComputeSPE}}\label{app:proofs}
Our analysis assumes the Nash solver runtime is represented by~$T$, which may or may not depend upon the size of the game tree.
Our algorithm combines the solver of choice with dynamic programming to solve for the SPE, working on each subgame (minus the information sets of the subgame already included in the partial SPE) for time $T$. If a solution is found for a subgame $G_{\theta_1}$, and the algorithm moves on to compute a solution for the subgame~$G_{\theta_2}$, where $G_{\theta_1} \subset G_{\theta_2}$, $G_{\theta_1}$ is not traversed again when finding the next solution.
Thus, if we assume that each part of the tree is traversed at most $T$ times, the runtime of the algorithm is also $O(T) \cdot O(\abs{H})$.
Alternatively, if we consider that in the worst-case, the number of actions in an information set is $A$ and the total number of information sets in $G$ is $\abs{\mathcal{I}}$, the runtime of the algorithm is $O \left(T A \cdot \abs{\mathcal{I}} \right)$, which is much tighter than that of GBI.

Our method for finding SPE applies to extensive-form games of imperfect information, as we demonstrate in the following lemma.

\begin{lemma}
\textsc{ComputeSPE} can find the SPE of any game $G$ of imperfect information.
\end{lemma}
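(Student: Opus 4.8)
The plan is to prove correctness by showing that the profile $\bsigma^{SPE}$ returned by \textsc{ComputeSPE} induces a Nash equilibrium in \emph{every} subgame of $G$; by Definition~\ref{def:SPE} this is exactly what it means to be an SPE (taking $G$ itself, rooted at $h_0$, as the largest subgame). The first thing I would establish is that the subgames of $G$ form a laminar (nested) family: by Definition~\ref{def:subgame} a subgame contains all successors of its root and no information set straddles its boundary, so any two subgames are either disjoint or one contains the other. This justifies organizing their roots into the tree $\Psi$ and assigning each a well-defined height $1\le k\le\ell$, and it guarantees that the proper subgames of a height-$k$ subgame all have height strictly less than $k$.

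The core argument is an induction on height $k$. For the base case, the height-$1$ subgames have no proper subgames, so \textsc{GetInitialSPE} computes a genuine NE of each via \textsc{SubgameCFR}; hence $\bsigma^{SPE}$ restricted to each height-$1$ subgame is an NE. For the inductive step, assume $\left.\bsigma^{SPE}\right|_{G_{\theta'}}$ is an NE for every subgame $G_{\theta'}$ of height $<k$, and consider a subgame $G_\theta$ at height $k$. All of its proper subgames have already been solved, so the algorithm fixes their strategies (passed in as $\left.\bsigma^{SPE}\right|_{G_\theta}$) and calls \textsc{NashSolver} only on the information sets of $G_\theta$ not lying inside any proper subgame. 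I would show this yields an NE of the \emph{whole} $G_\theta$.

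The crux --- and the step I expect to be the main obstacle --- is the generalized-backward-induction reduction: replacing each proper subgame $G_{\theta'}$ of $G_\theta$ by a terminal node whose payoff is the equilibrium value $U(\left.\bsigma^{SPE}\right|_{G_{\theta'}})$ yields a reduced (still imperfect-information) game $\tilde G_\theta$, and a profile is an NE of $G_\theta$ \emph{iff} (i) it is an NE inside each proper subgame and (ii) its projection to $\tilde G_\theta$ is an NE of $\tilde G_\theta$. The ``only-if'' of the collapse is routine; the ``if'' direction is the delicate part, and is where imperfect information makes this nontrivial. I would argue it via a one-deviation decomposition: because the subgame root is a singleton information set and no information set of $G_\theta$ straddles a proper-subgame boundary (Definition~\ref{def:subgame}), together with perfect recall, any unilateral deviation of player $j$ factorizes into a deviation on the top-level information sets of $\tilde G_\theta$ and independent deviations confined to individual proper subgames. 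Deviations confined to a proper subgame cannot raise $j$'s payoff by the inductive hypothesis (that subgame is already in NE, and its equilibrium value is exactly what the collapse records), while the top-level deviation cannot help because condition (ii) holds. Hence no profitable deviation exists and $\left.\bsigma^{SPE}\right|_{G_\theta}$ is an NE of $G_\theta$.

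Finally, I would conclude the induction at $k=\ell$: $\bsigma^{SPE}$ induces NE play in $G$ and in all its subgames, so it is an SPE by Definition~\ref{def:SPE}. Two remarks close the proof. First, the whole argument invokes only the subgame structure of Definition~\ref{def:subgame}, which is stated for imperfect-information games with perfect recall and permits non-singleton top-level information sets in each $\tilde G_\theta$; since \textsc{NashSolver}/\textsc{SubgameCFR} returns (possibly mixed or behavioral) equilibria of such reduced games, nothing in the construction assumes perfect information, which establishes the claimed generality. Second, the construction never fails: each $\tilde G_\theta$ is finite and hence admits an NE, so \textsc{NashSolver} always returns a valid partial solution and $\bsigma^{SPE}$ is well-defined.
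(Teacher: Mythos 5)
Your proof follows essentially the same route as the paper's: induction on the height of subgames in the nested decomposition, fixing the already-computed equilibria of proper subgames and invoking the Nash solver only on the remaining information sets. In fact, your write-up is \emph{more} rigorous than the paper's own proof, which simply asserts that the union of the sub-solutions with the solver's output is an NE of each enclosing subgame; your collapse-plus-one-deviation argument (using the singleton root, the no-straddling condition of Definition~\ref{def:subgame}, and perfect recall to factor any unilateral deviation into a top-level part and parts confined to proper subgames) is exactly the justification the paper leaves implicit.

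One correction is needed: your claimed equivalence is not an ``iff.'' The ``only-if'' direction --- that every NE of $G_\theta$ induces NE in each proper subgame --- is false in general; this is precisely the gap between NE and SPE, since an NE may prescribe non-credible play in subgames reached with probability zero (e.g., the fight threat in entry deterrence). Fortunately that direction is never used: the algorithm's output satisfies your condition (i) by the inductive hypothesis and condition (ii) by construction of the \textsc{NashSolver} call on the reduced game, and your ``if'' direction then yields that the restriction to $G_\theta$ is an NE (indeed, with the inductive hypothesis, an SPE) of $G_\theta$. So the argument stands once the biconditional is weakened to the single implication you actually prove.
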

\begin{proof}
There are two cases that arise when $G$ has imperfect information. In the first case, $G$ has no subgames besides itself. The subroutine \textsc{GetInitialSPE} within \textsc{ComputeSPE} is called on $G$ itself, as the height of the root $h_0$ in $\Psi$ must be 1. Since \textsc{GetInitialSPE} solves a given subgame using the black-box Nash solver, \textsc{ComputeSPE} returns the resulting NE, which is therefore the SPE.

In the second case, $G$ contains nontrivial subgames. \textsc{ComputeSPE} begins by first solving each of the subgames at height $k=1$ in the tree with the black-box Nash solver. The NE returned for each of these games must by definition be the SPE for each of these games. Consider this the base case for proof by induction. Then, the solution $\left.\bm{\sigma}^{SPE}\right|_{G_{\theta}}$ for any subgame $G_{\theta}$ at height $k$ is fixed, and the solver is applied to the subgame $G_{\theta'}$ at height $k + 1$ that contains it so as to find the optimal strategy $\left.\bm{\sigma}^{k+1}\right|_{G_{\theta'}}$ for that larger subgame (without overwriting the solutions for subgames at lower heights). $\left.\bm{\sigma}\right|_{G_{\theta'}}$ will consist of $\left.\bm{\sigma}^{SPE}\right|_{G_{\theta}}$ and the optimal joint strategy profile for the information sets that comprise the rest of $G_{\theta'}$ found via the Nash solver. This optimal profile is the SPE for that particular subgame. Since this continues for all subgames leading up to $h_0$, it follows by induction that the solution is ultimately the union of all SPE for the subgames of $G$, which by definition is the SPE. 
\end{proof}

\newpage
\section{Space and Runtime Requirements for Experiments}\label{app:great_lakes}
We give the runtime and memory requirements of TE-PSRO for different values of $M$, different choices of EVAL, and different choices of MSS. We also give the memory requirements of PSRO with a normal-form model. All values are for the trial that required the greatest amount of memory and for the trial that ran for the longest amount of time.

\begin{table}[h!]
    \centering
    \bgroup
    \def\arraystretch{1.1}
    \begin{tabular}{|c|c|c|c|c|}
    \hline
         $M$ & EVAL & MSS & Memory Used & Runtime in Hours \\
          \hline
        1 & NE & NE & 6.2GB & 10\\
        1 & NE & SPE & 6GB & 10\\
        1 & SPE & NE & 5.9GB & 9\\
        1 & SPE & SPE & 5GB & 8\\
        \hline
        2 & NE & NE & 12GB & 12\\
        2 & NE & SPE & 12GB & 16\\
        2 & SPE & NE & 11GB & 14\\
        2 & SPE & SPE & 12GB & 18\\
        \hline
        4 & NE & NE & 12GB & 11\\
        4 & NE & SPE & 10GB & 3\\
        4 & SPE & NE & 11GB & 12\\
        4 & SPE & SPE & 11GB & 7\\
        \hline
        8 & NE & NE & 10GB & 19\\
        8 & NE & SPE & 11GB & 6\\
        8 & SPE & NE & 11GB & 17 \\
        8 & SPE & SPE & 10GB & 8\\
        \hline
        16 & NE & NE & 11GB & 31\\
        16 & NE & SPE & 11GB & 30\\
        16 & SPE & NE & 11GB & 38\\
        16 & SPE & SPE & 11GB & 8\\
        \hline
        NF-PSRO & NE & NE & 20GB & 80\\
        \hline
    \end{tabular}
    \egroup
    \vspace{1em}
    \caption{Runtime and Memory Requirements of Experiments for Sequential Bargaining Game}
    \label{tab:great_lakes}
\end{table}

\newpage
\section{Omitted Experimental Results and Plots}\label{app:expts}
\subsection{Average Empirical Game Size over TE-PSRO Epochs}\label{app:game_size}

\begin{figure}[h]
    \centering
    \begin{subfigure}[b]{0.5\textwidth}
    \caption{\barg}
    \includegraphics[width=\textwidth]{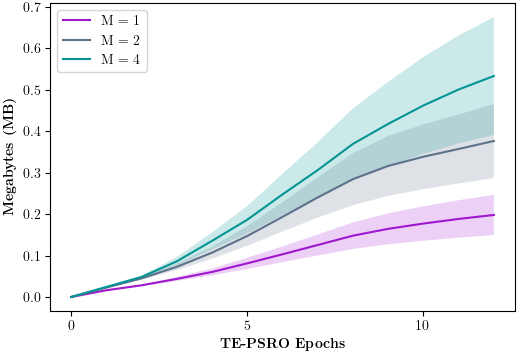}
    \label{fig:mem_barg}
    \end{subfigure}~
    \begin{subfigure}[b]{0.5\textwidth}
    \caption{\gengoofK{4}}
    \includegraphics[width=\textwidth]{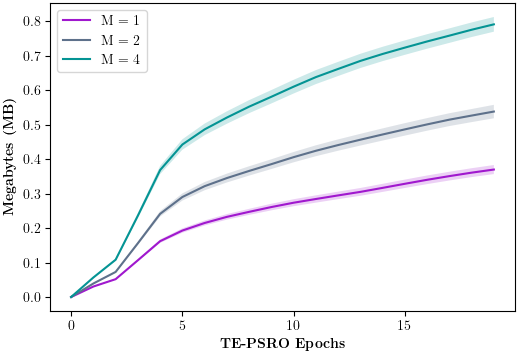}
     \label{fig:memory_gengoof}
    \end{subfigure}
    \caption{Memory in megabytes (MB) required by empirical game $\hat{G}$ over the course of TE-PSRO's runtime, averaged over all combinations of all parameters (except $M$) and seeds.}
    \label{fig:game_size_app}
\end{figure}
Fig.~\ref{fig:game_size} in \S\ref{sec:exp_results} and 
Fig.~\ref{fig:game_size_app} above together underscore the effectiveness of our heuristic of a choosing a subset of $M$ information sets to augment with outgoing edges based on the latest best response in keeping the empirical game tree growth tractable for both \barg and \gengoofK{4}. However, there is one stark difference between the results for these two games that merits elaboration: for \gengoofK{4}, the game size is significantly more consistent over time with a lower variance than for \barg as indicated by the smaller error bars for \gengoofK{4}. 

The likely reason for the above observation is that \gengoofK{4} by design consists of fewer rounds of player action ($3$ per player) than \barg ($5$ per player), which means that fewer epochs of TE-PSRO passed before a complete empirical history was included in the game of \gengoofK{4} instead of a default policy being assumed for the remainder of a history, as described in Section~\ref{sec:expand_game_tree}. For \barg, depending on the choice of $M$ and where the new best response policy was added, this results in higher variability in the lengths of these histories, hence higher variability in the number of information sets that comprised $\hat{G}$. Additionally, despite the size of $\hat{G}$ being more consistent, \gengoofK{4} did result in a larger empirical game tree than \barg (compare the vertical axes of Figs.~\ref{fig:mem_barg} and~\ref{fig:memory_gengoof}). This is not particularly surprising since \barg by design began with two stochastic rounds, each containing a binary event, while \gengoofK{4} included three stochastic rounds with a decreasing number of possible outcomes per event (four, then three, then two); this led to significantly more player information sets for \gengoofK{4} once these events were publically realized.

\subsection{Average Regret Over Time Given $M$ For Different Choices of MSS}

\begin{figure}[H]
    \centering
    \begin{subfigure}[b]{0.48\textwidth}
    \includegraphics[width=\textwidth]{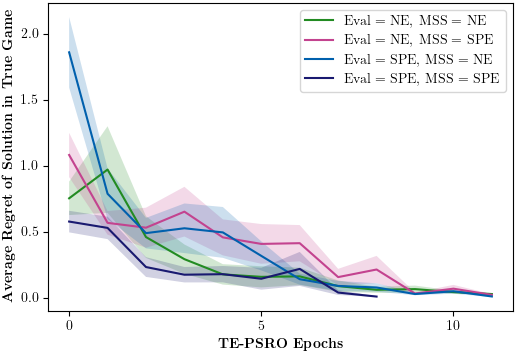}
    \caption{$M = 1$}
    \label{fig:app_regret_M1_no_NF}
    \end{subfigure}~
    \begin{subfigure}[b]{0.48\textwidth}
    \includegraphics[width=\textwidth]{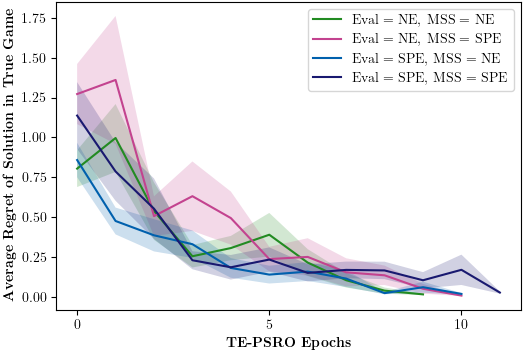}
    \caption{$M = 2$}
    \label{fig:app_regret_M2_no_NF}
    \end{subfigure}
    \begin{subfigure}[b]{0.48\textwidth}
    \includegraphics[width=\textwidth]{figs/true_regret_plots_by_num_br/true_regret_7_4_M4_no_NF.png}
    \caption{$M = 4$}
    \label{fig:app_regret_M4_no_NF}
    \end{subfigure}~
    \begin{subfigure}[b]{0.48\textwidth}
    \includegraphics[width=\textwidth]{figs/true_regret_plots_by_num_br/true_regret_7_3_M8_no_NF.png}
    \caption{$M = 8$}
    \label{fig:app_regret_M8_no_NF}
    \end{subfigure}
    \begin{subfigure}[b]{0.48\textwidth}
    \includegraphics[width=\textwidth]{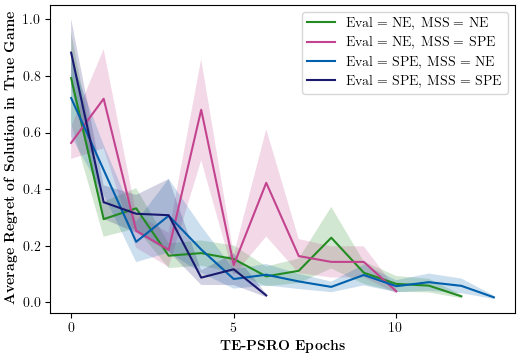}
    \caption{$M = 16$}
    \label{fig:app_regret_M16_no_NF}
    \end{subfigure}
    \caption{Average regret of evaluated solution from empirical game $\hat{G}$ in true game over the course of TE-PSRO's runtime, using different MSS's for exploration and different values of $M$.}
    \label{fig:app_true_regret_no_NF_by_M}
\end{figure}
\newpage

\subsection{Average Regret Over Time Given $M$ For Different Choices of MSS Compared to Normal-Form}

\begin{figure}[H]
    \centering
    \begin{subfigure}[b]{0.48\textwidth}
    \includegraphics[width=\textwidth]{figs/true_regret_plots_by_num_br/true_regret_6_16_M1.png}
    \caption{$M = 1$}
    \label{fig:app_regret_M1}
    \end{subfigure}~
    \begin{subfigure}[b]{0.48\textwidth}
    \includegraphics[width=\textwidth]{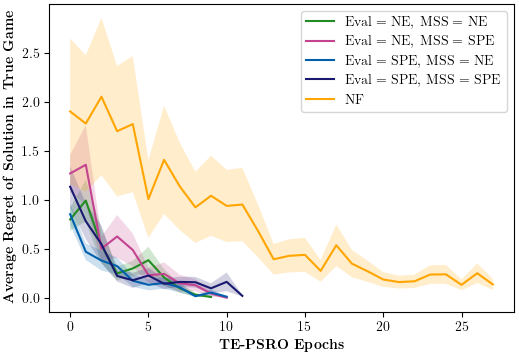}
    \caption{$M = 2$}
    \label{fig:app_regret_M2}
    \end{subfigure}
    \begin{subfigure}[b]{0.48\textwidth}
    \includegraphics[width=\textwidth]{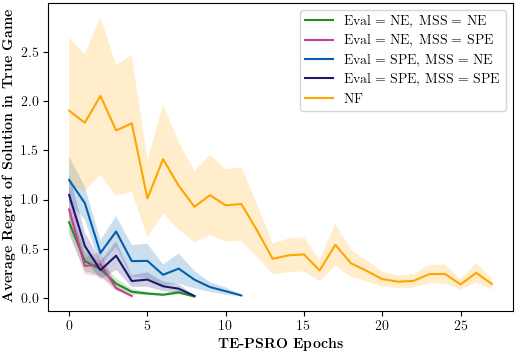}
    \caption{$M = 4$}
    \label{fig:app_regret_M4}
    \end{subfigure}~
    \begin{subfigure}[b]{0.48\textwidth}
    \includegraphics[width=\textwidth]{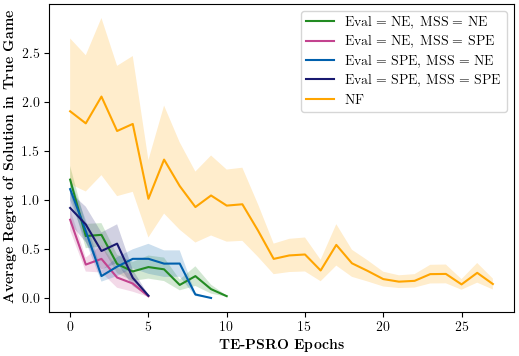}
    \caption{$M = 8$}
    \label{fig:app_regret_M8}
    \end{subfigure}
    \begin{subfigure}[b]{0.48\textwidth}
    \includegraphics[width=\textwidth]{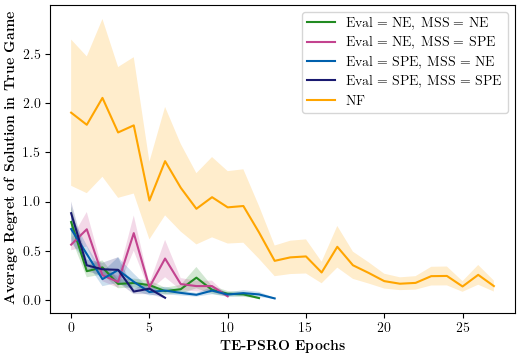}
    \caption{$M = 16$}
    \label{fig:app_regret_M16}
    \end{subfigure}
    \caption{Average regret of evaluated solution from empirical game $\hat{G}$ in true game over the course of TE-PSRO's runtime, using different MSS's for exploration and different values of $M$. Regret over time for PSRO using a normal-form model included as baseline.}
    \label{fig:app_true_regret}
\end{figure}
\newpage
\subsection{Average Regret Over Time Given MSS and Evaluation Strategy For Different Choices of $M$}

\begin{figure}[H]
    \centering
    \begin{subfigure}[b]{0.48\textwidth}
    \includegraphics[width=\textwidth]{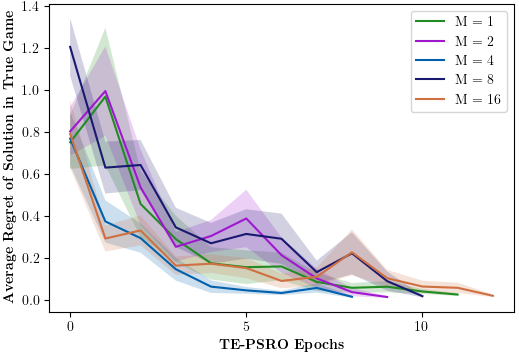}
    \caption{NE for true regret evaluation and for MSS}
    \label{fig:app_regret_ne_mss_ne_eval}
    \end{subfigure}~
    \begin{subfigure}[b]{0.48\textwidth}
    \includegraphics[width=\textwidth]{figs/true_regret_plots_by_mss/true_regret_spe_mss_ne_eval_7_3.png}
    \caption{NE for true regret evaluation, SPE for MSS}
    \label{fig:app_regret_spe_mss_ne_eval}
    \end{subfigure}
    \begin{subfigure}[b]{0.48\textwidth}
    \includegraphics[width=\textwidth]{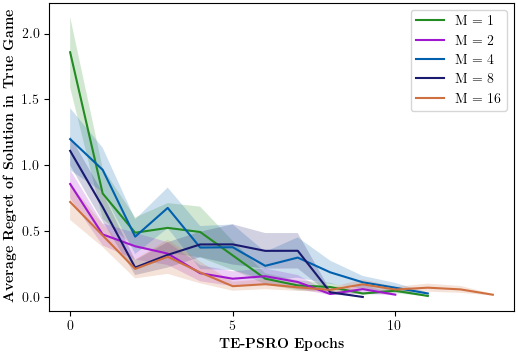}
    \caption{SPE for true regret evaluation, NE for MSS}
    \label{fig:app_regret_ne_mss_spe_eval}
    \end{subfigure}~
    \begin{subfigure}[b]{0.48\textwidth}
    \includegraphics[width=\textwidth]{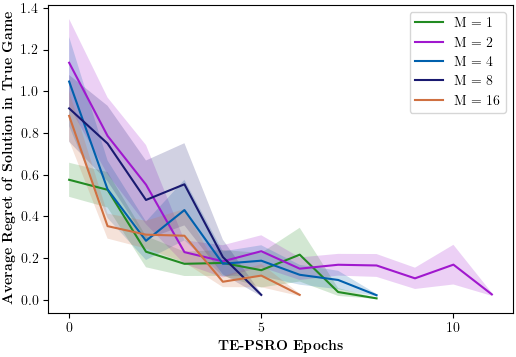}
    \caption{SPE for true regret evaluation and for MSS}
    \label{fig:app_regret_spe_mss_spe_eval}
    \end{subfigure}
    \caption{Average regret of evaluated solution from empirical game $\hat{G}$ in true game over the course of TE-PSRO's runtime, using different MSS's for exploration and different values of $M$.}
    \label{fig:app_true_regret_no_NF}
\end{figure}

\newpage
\subsection{Permutation Test Results for All $M$ and Choices of Evaluation Strategy}
\label{app:p-values}

\begin{table}[ht!]
    \centering
    \begin{tabular}{|c|c|c|c|}
    \hline
        $D_{\textsc{obs}}$ & $M$ & EVAL & p-value \\
    \hline
        0.210 & 1 & SPE & 0.286 \\
        -0.587 & 1 & NE & 0.862 \\
        -0.453 & 2 & SPE & 0.823 \\
        -0.152 & 2 & NE & 0.600 \\
        0.787 & 4 & SPE & 0.105 \\
        \textbf{0.124} & \textbf{4} & \textbf{NE} & \textbf{0.007} \\
        \textbf{0.938} & \textbf{8} & \textbf{SPE} & \textbf{0.021} \\
        \textbf{0.908} & \textbf{8} & \textbf{NE} & \textbf{0.006} \\
        0.488 & 16 & SPE & 0.086 \\
        -0.209 & 16 & NE & 0.636 \\
    \hline
    \end{tabular}
    \vspace{1em}
    \caption{Statistically significant p-values are highlighted in boldface. In these cases, we reject the null hypothesis and conclude that the using SPE as the MSS results in a statistically significant improvement in convergence to zero regret over using NE as the MSS, measured using the area under the curve. The area under the curve captures the number of TE-PSRO epochs until convergence as well as the distance on the y-axis from zero regret.}
    \label{tab:regret_perm_k5}
\end{table}

\subsection{Impact of $IR$-based coarsening of \gengoof empirical model on TE-PSRO performance.}
\label{app:IR_gengoof}

Recall that in Fig.~\ref{fig:abstract_spe_mss_chap8}, we see that $IR = [0]$ and $IR = [0, 1]$ yield the best performance regardless of MSS. This is a surprising result since intuition suggests that the more tree structure from the underlying game is incorporated into $\hat{G}$, the better TE-PSRO should perform, converging faster and to a tighter regret. 

A natural explanation for this observation is the complex interaction of $IR$ with the parameter $M$ that we imposed as a strict limit on the growth of $\hat{G}$ when adding best responses (Section~\ref{sec:best_response_augment}). The action edges of $\hat{G}$ either matched the actions of the underlying game in the rounds whose stochastic events were included or represented abstract DRL policies in the rounds of the game whose stochastic events were excluded. Additionally, the policies were learned as best responses for the \textit{entire underlying game} and then added to specific information sets as best responses, while in the rounds containing stochastic events, only the single actions at those information sets which yielded the best performance (i.e., had the maximum learned Q-value of the four available actions) were added as best responses. In this way, including the abstract policies without the stochastic event is a better option for exploiting the tree structure of the underlying game even though less tree structure is technically incorporated into $\hat{G}$. However, if we could add new best responses to all the information sets of $\hat{G}$ in each iteration of TE-PSRO without any limitation on the tree's growth, even for larger games, then it seems likely that including more stochastic rounds in $\hat{G}$ would be beneficial.

\section{Experimental Details for Best Response/DQN}\label{app:dqn}
The trained DQN consisted of a neural network with a single hidden layer containing 100 neurons and a fully connected ReLU activation. The action space was two-dimensional, as described in Section~\ref{sec:DOND}, but was flattened into a one-dimensional output vector of length $\vert \cA \vert$ for the network; it is important to note that Agent 1 was restricted from accepting a deal or walking during its first turn, since this would be illogical. The length of the input vector varied, depending on the item pool, valuation distribution, and outside offers that comprised the five different parameter settings of the bargaining game. The exploration policy was $\epsilon$-greedy and was initially set to $\epsilon = 1.0$ and decayed to the final, minimum $\epsilon$ value.
Each player was allotted the same number of training steps to learn the BR. During the DQN's experience replay, a batch size of 64 data points was sampled from the memory, and the Adam optimizer was used to update the network weights. The memory buffer was limited to 200k experiences. We did not allow the network to begin training and updating weights until after 10k timesteps had passed (i.e. at least 10k experiences were stored in the memory buffer). We considered the hyperparameters in Table~\ref{tab:DQN_considered_hyper} as candidates for tuning and found the hyperparameters in Table~\ref{tab:DQN_params} to perform best as the result of a randomized search over the grid space of values. After network training was complete, we used a temperature of 1.0 for the softmax function we use to select the $M$ information sets to which the best response policy label will be added.

\begin{table}[h!]
    \centering
    \begin{tabular}{|l||r|}
    \hline
    \multicolumn{2}{|c|}{Training Parameters} \\
    \hline
        Training Steps & 150000 \\
        Number of Hidden Neurons & 100 \\
        $\epsilon$ Decay & \textsc{Linear} \\
        Minimum $\epsilon$ & 0.02 \\
        Learning Rate $\alpha$ & $1\mathrm{e}{-4}$ \\
        Target Update Frequency & 2 \\
    \hline
    \end{tabular}
    \vspace{1em}
    \caption{Learned Hyperparameters of DQN for Sequential Bargaining Games.}
    \label{tab:DQN_params}
\end{table}

\begin{table}[h!]
    \centering
    \begin{tabular}{|l||r|}
    \hline
    \multicolumn{2}{|c|}{Hyperparameters} \\
    \hline
        Training Steps & $\{ 1\mathrm{e}{5}, 1.5\mathrm{e}{5}, 2\mathrm{e}{5}, 5\mathrm{e}{5}, 1\mathrm{e}{6} \}$ \\
        Number of Hidden Neurons & $\{ 50, 100, 200 \}$ \\
        $\epsilon$ Decay & $\{ \textsc{Linear}, \textsc{Exp} \}$ \\
        Minimum $\epsilon$ & $\{ 0.01, 0.02, 0.05 \}$ \\
        Learning Rate $\alpha$ & $\{ 1\mathrm{e}{-4}, 3\mathrm{e}{-4}, 5\mathrm{e}{-4}, 1\mathrm{e}{-3}, 3\mathrm{e}{-3} \}$ \\
        Target Update Frequency & $\{ 1, 2, 5 \}$ \\
    \hline
    \end{tabular}
    \vspace{1em}
    \caption{Hyperparameters of DQN Considered for Sequential Bargaining Games.}
    \label{tab:DQN_considered_hyper}
\end{table}


\end{document}